\newcommand{\muspace}{\mspace{1mu}}
\DeclareRobustCommand{\scond}{\mathchoice{\muspace\vert\muspace}{\vert}{\vert}{\vert}}
\DeclareRobustCommand{\discint}{\mathchoice{\mspace{-1.5mu}:\mspace{-1.5mu}}{\mspace{-1.5mu}:\mspace{-1.5mu}}{:}{:}}
\def\co{\mathop{\rm co}\nolimits}%
\newcommand{\Pc}{\mathcal{P}}
\newcommand{\Qc}{\mathcal{Q}}
\newcommand{\Uc}{\mspace{1.5mu}\mathcal{U}}
\newcommand{\Vc}{\mathcal{V}}
\newcommand{\Xc}{\mathcal{X}}
\newcommand{\Qcal}{\mathcal{Q}}
\newcommand{\Ucal}{\mathcal{U}}
\newcommand{\Xcal}{\mathcal{X}}
\renewcommand{\Pr}{\mathscr{P}}
\newcommand{\Rr}{\mathscr{R}}
\def\a{\alpha}
\def\b{\beta}
\def\l{\lambda}
\DeclareMathOperator\E{\textsf{E}}
\let\P\relax
\DeclareMathOperator\P{\textsf{P}}
\newcommand{\Bern}{\mathrm{Bern}}
\def\textiid{i.i.d.\@\xspace}
\newcommand\iid{\ifmmode\text{ i.i.d. } \else \textiid \fi}
\def\mathllap{\mathpalette\mathllapinternal}
\def\mathllapinternal#1#2{%
  \llap{$\mathsurround=0pt#1{#2}$}}
\def\clap#1{\hbox to 0pt{\hss#1\hss}}
\def\mathclap{\mathpalette\mathclapinternal}
\def\mathclapinternal#1#2{%
  \clap{$\mathsurround=0pt#1{#2}$}}
\let\oldstackrel\stackrel
\renewcommand{\stackrel}[2]{\oldstackrel{\mathclap{#1}}{#2}}
\renewcommand{\hbar}{h\mathllap{\overline{\vphantom{h}\hphantom{\rule{4.6pt}{0pt}}}\mspace{0.77mu}}}
\newcommand{\urltilde}{\kern -.06em\lower -.06em\hbox{~}\kern .02em}
\newtheorem{theorem}{\textbf{Theorem}}
\newtheorem{lemma}{\textbf{Lemma}}
\newtheorem{definition}{\textbf{Definition}}
\newtheorem{remark}{\textbf{Remark}}
\newcommand{\highlight }{\color{blue}}
\newcommand{\CC}{\mathfrak{C}}
\begin{document}
\title{Superposition Coding is Almost Always Optimal for the Poisson Broadcast Channel\\
\vspace{-5pt}}

\author{Hyeji Kim\IEEEauthorrefmark{1}, Benjamin Nachman\IEEEauthorrefmark{2} and Abbas El Gamal\IEEEauthorrefmark{1}

\thanks{\IEEEauthorrefmark{1} Hyeji Kim and Abbas El Gamal are with the Department of Electrical Engineering, Stanford University (email: hyejikim@stanford.edu and abbas@ee.stanford.edu).}\thanks{\IEEEauthorrefmark{2} Benjamin Nachman is with the Department of Physics, Stanford University (email: nachman@stanford.edu).}
\thanks{ Hyeji Kim is supported by the Alma M. Collins Stanford Graduate Fellowship. Benjamin Nachman is supported by the NSF Graduate Research Fellowship under Grant No. DGE-4747 and by the Stanford Graduate Fellowship. This paper was presented in part at \emph{Proc. IEEE Int. Symp. Inf. Theory, Hong Kong, 2015}.}%
}

\maketitle

%\linenumbers

%-------------------------------------------------------------------------------------------------------
\begin{abstract}
This paper shows that the capacity region of the continuous-time Poisson broadcast channel is achieved via superposition coding for most channel parameter values. Interestingly, the channel in some subset of these parameter values does not belong to any of the existing classes of broadcast channels for which superposition coding is optimal (e.g., degraded, less noisy, more capable). In particular, we introduce the notion of effectively less noisy broadcast channel and show that it implies less noisy but is not in general implied by more capable. For the rest of the channel parameter values, we show that there is a gap between Marton's inner bound and the UV outer bound. 
\end{abstract}
\IEEEpeerreviewmaketitle

%-------------------------------------------------------------------------------------------------------
\section{Introduction}
The continuous-time Poisson channel is a canonical model of the point to point optical communication channel in the low power regime~\cite{ref3,ref4,ref5}. The capacity of this channel was established using different approaches by Kabanov~\cite{Kabanov78}, Davis~\cite{Davis80}, and Wyner~\cite{Wyner1988a, Wyner1988b}. In particular, Wyner~\cite{Wyner1988a, Wyner1988b} established the capacity using an elementary method in which the capacity is shown to be the the limit of the capacity of a certain memoryless binary channel. Wyner's approach spurred several generalizations to multiple user Poisson channels. In~\cite{Lapidoth--Shamai98} Lapidoth and Shamai established the capacity region of the Poisson multiple-access channel. In~\cite{Lai--Liang--Shamai15}, Lai, Liang, and Shamai studied the Poisson interference channel. In~\cite{Bross--Lapidoth--Wang09}, Bross, Lapidoth, and Shamai studied the Poisson channel with side information at the transmitter. In~\cite{Lapidoth--Telatar--Urbanke2003}, Lapidoth, Telatar, and Erbanke studied the Poisson broadcast channel and established the condition under which the channel is degraded; hence the capacity region is achieved using superposition coding~\cite{Cover1972}. 

In this paper, which is an expanded and a more complete version of~\cite{Kim--Nachman--EG2015}, we show that for the Poisson broadcast channel, superposition coding is optimal much beyond the parameter ranges for which the channel is degraded. We consider the 2-receiver continuous-time Poisson broadcast channel (P-BC) depicted in Figure~\ref{fig1}. The channel input signal $X(t) \in [0,1]$, $t \ge 0$, that is, we assume a \emph{peak power} constraint of 1 on $X(t)$.%?
 Given $X(t) = x(t)$, the output $Y_i(t)$ is a Poisson process (PP) with instantaneous rate $A_i(x(t)+s_i)$ for $i=1,2$, i.e., for $0 \le w \le w+\tau \le T$,
\[
\P\{Y_i(w+\tau) - Y_i(w) = k | X(t) = x(t),\, t\in [0,T]\} = \frac{\Gamma^k(w,\tau)}{k!}e^{-\Gamma(w,\tau)},\; k \in \mathbb{N}, 
\]
where 
\[
\Gamma(w, \tau) = \int_{w}^{w+\tau} A_i(x(t)+s_i)\ dt.
\]
The parameter $A_i$ is the channel gain for receiver $i=1,2$. The parameter $s_i \ge 0$ is the rate of the input referred dark noise for receiver $i$.

We consider the private message setting in which the sender $X$ wishes to communicate a message $M_1$ to receiver $Y_1$ at rate $R_1$ and a message $M_2$ to receiver $Y_2$ at rate $R_2$, where $M_1$ and $M_2$ are independent and uniformly distributed over $[1:2^{nR_1}]\times [1:2^{nR_2}]$. The results we establish on the optimality of superposition coding can be readily extended to the case with common message~\cite{EG--Kim2011}. We define a $(2^{nR_1},2^{nR_2},n)$ code, achievability, and the capacity region for this setting in the standard way~\cite{EG--Kim2011}. 
%We consider the private messages setting. The sender has two messages $(M_1, M_2) \in \mathcal{M}_1 \times \mathcal{M}_2$. 
%A $(2^{nR_1}, 2^{nR_2}, T)$ code for P-BC consists of (i) two message sets $[1:2^{nR_1}]$ and $[1:2^{nR_2}]$; (ii) an encoder that assigns $X_{(m_1,m_2)}(t), 0 \le t \le T$, which satisfy the power constraint in~\eqref{power}, for each message tuple $(m_1, m_2) \in [1:2^{nR_1}]\times [1:2^{nR_2}]$; and (iii) two decoders. Decoder 1 assigns an estimate $\hat{m}_1$ or an error message $e$ to each received signal $Y_1(0,T)$. Decoder 2 assigns an estimate $\hat{m}_2$ or an error message $e$ to each received signal $Y_2(0,T)$. 
%The probability of error is defined as
%\[
%\pen = \P\{\hat{M}_1 \neq M_1 \text{ or } \hat{M}_2 \neq M_2\}.
%\]
%A rate tuple $(R_1,R_2)$ is said to be achievable if, for any $\epsilon > 0$, there exists (for $T$ sufficiently large) a $(2^{nR_1}, 2^{nR_2}, T)$ code such that $P_{e} \le \epsilon$. The capacity region is defined as the closure of the set of all achievable rate tuples.
\begin{figure}[htpb]
\begin{center}
\psfrag{X}[r]{$X(t)$}
\psfrag{Z1}[b]{$s_1$}
\psfrag{Z2}[t]{$s_2$}
\psfrag{Y1}[l][l]{$Y_1(t)$}
\psfrag{Y2}[l][l]{$Y_2(t)$}
\psfrag{p}[c]{$\mathrm{PP}(.)$}
\psfrag{a1}[b]{$A_1$}
\psfrag{a2}[b]{$A_2$}
\psfrag{t}[cc]{$\ t$}
\psfrag{x}[b]{$X(t)$}
\psfrag{y1}[b]{$\ Y_1(t)$}
\psfrag{y2}[b]{$\ Y_2(t)$}
\psfrag{0}[cc]{$0\ $}
\psfrag{1}[cc]{$1$}
\includegraphics[scale=0.5]{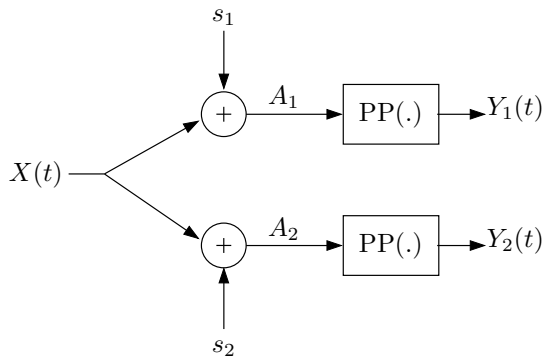}
\end{center}
\caption{Two receiver continuous time Poisson broadcast channel.}
\label{fig1}
\end{figure}

As in~\cite{Lapidoth--Telatar--Urbanke2003}, we use Wyner's approach for the point-to-point Poisson channel to study the capacity region of the Poisson broadcast channel. As depicted in Figure~\ref{fig:BPBC}-(a), time is quantized into intervals of length $\Delta>0$ and the continuous time P-BC is approximated in each interval by the binary memoryless broadcast channel (binary P-BC) depicted in Figure~\ref{fig:BPBC}-(b), with transition probabilities:
\begin{align}\begin{split}\label{BPBC}
a_1&=A_1s_1\Delta+O(\Delta^2), \quad a_2=A_2s_2\Delta+O(\Delta^2),\\ 
b_1&=A_1(1+s_1)\Delta+O(\Delta^2), \quad b_2=A_2(1+s_2)\Delta+O(\Delta^2).
\end{split}\end{align}
Following Wyner's arguments, Lapidoth, Telatar, and Urbanke showed that the capacity region of the P-BC is equal to the capacity region of the $1/\Delta$-extension of this binary channel as $\Delta$ tends to zero.

\begin{figure}[t]
\begin{center}
\begin{tabular}{ccccc}
\psfrag{0}[t]{$0$}
\psfrag{1}[t]{$1$}
\psfrag{x1}[c]{$1$}
\psfrag{x2}[c]{$2$}
\psfrag{x3}[c]{$3$}
\psfrag{x4}[c]{$4$}
\psfrag{xn}[c]{$1/\Delta$}
\psfrag{a}[l]{\small $t$}
\psfrag{a1}[t]{$\Delta$}
\psfrag{b}[c]{}
\includegraphics[scale=0.55]{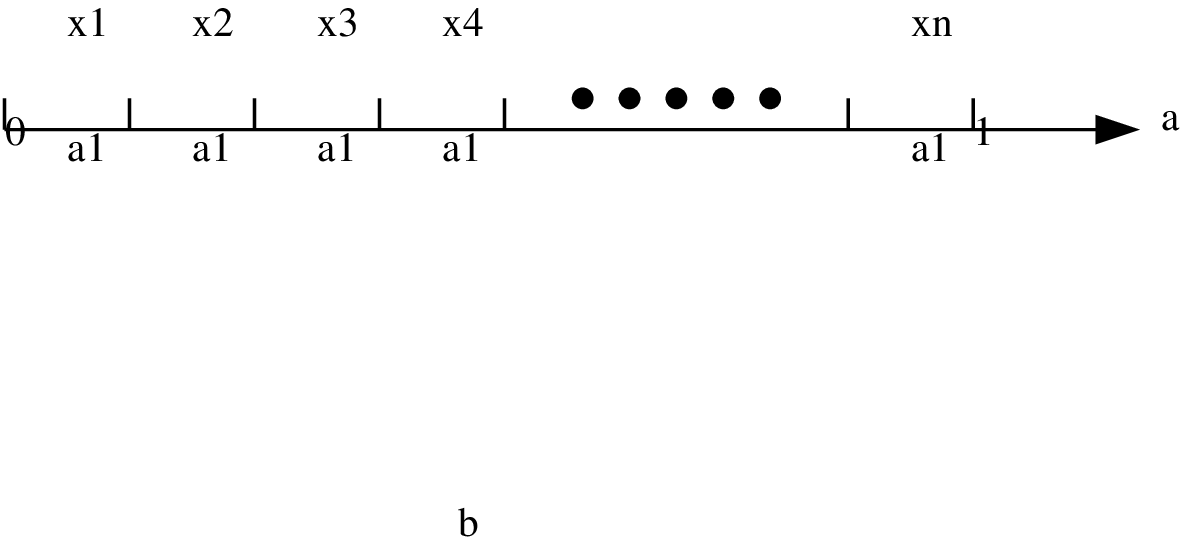}&&&&
\psfrag{0}[r]{$0$}
\psfrag{1}[r]{$1$}
\psfrag{z}[l]{$1$}
\psfrag{o}[l]{$0$}
\psfrag{z1}[l]{$0$}
\psfrag{o1}[l]{$1$}
\psfrag{q}[b]{$a_1$} 
\psfrag{p'}[t]{$b_1$} 
\psfrag{p}[b]{$a_2$} 
\psfrag{q'}[t]{$b_2$} 
\psfrag{X}[l]{$X$}
\psfrag{Y1}[c]{$Y_1^\Delta$}
\psfrag{Y2}[c]{$Y_2^\Delta$}
\includegraphics[scale=0.55]{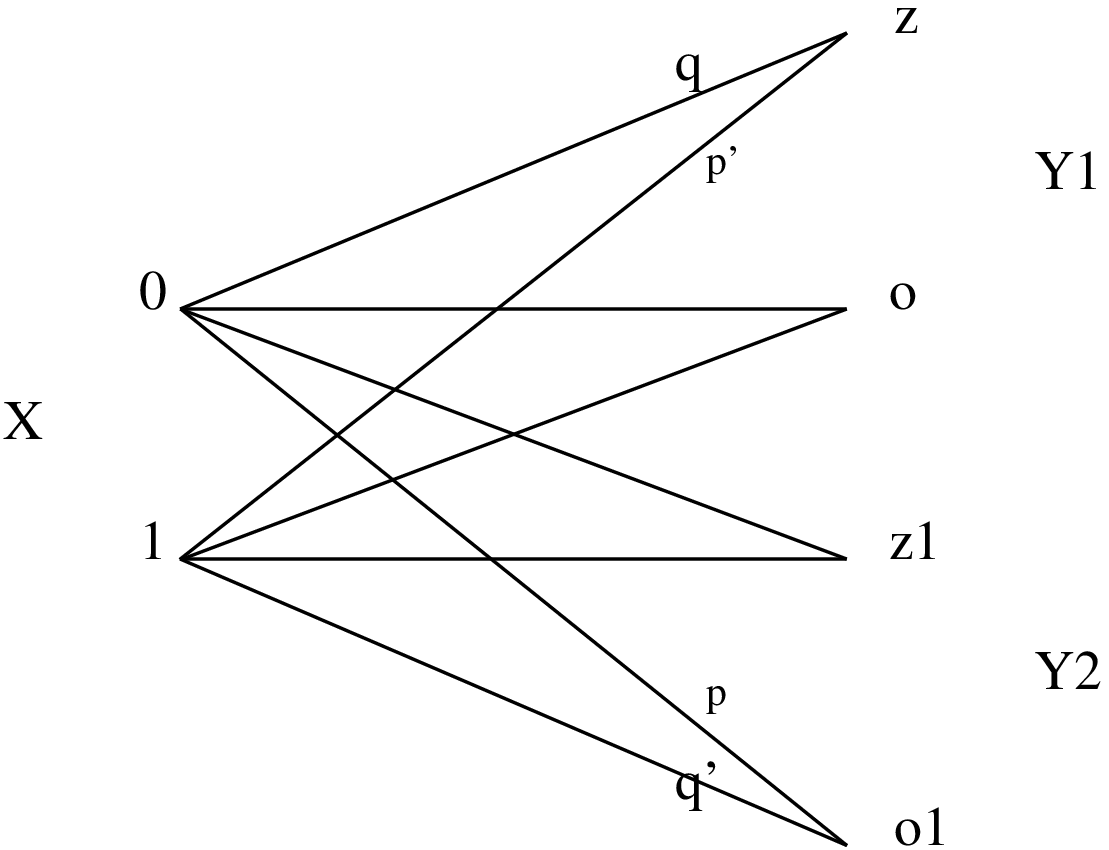}\\&&&&\\
(a) &&&& (b)
\end{tabular}
\caption{(a) Time quantization. (b) Binary P-BC.}
\label{fig:BPBC}
\end{center}
\end{figure} %% Bring (a) down. 

%%\begin{columns}
%%\begin{column}{0.5\paperwidth}
%\begin{figure}[htbp]
%\begin{center}
%\small
%\psfrag{0}[t]{$0$}
%\psfrag{1}[t]{$1$}
%\psfrag{x1}[c]{$1$}
%\psfrag{x2}[c]{$2$}
%\psfrag{x3}[c]{$3$}
%\psfrag{x4}[c]{$4$}
%\psfrag{xn}[c]{$1/\Delta$}
%\psfrag{a}[l]{\small $t$}
%\psfrag{a1}[t]{$\Delta$}
%\begin{tabular}{@{}c@{}}{\includegraphics[scale=0.675]{figs-new/discrete.eps}}\end{tabular}
%\psfrag{0}[r]{$0$}
%\psfrag{1}[r]{$1$}
%\psfrag{z}[l]{$1$}
%\psfrag{o}[l]{$0$}
%\psfrag{z1}[l]{$0$}
%\psfrag{o1}[l]{$1$}
%\psfrag{q}[b]{$a_1$} 
%\psfrag{p'}[t]{$b_1$} 
%\psfrag{p}[b]{$a_2$} 
%\psfrag{q'}[t]{$b_2$} 
%\psfrag{X}[c]{$X$}
%\psfrag{Y1}[c]{$Y_1^\Delta$}
%\psfrag{Y2}[c]{$Y_2^\Delta$}
%\qquad \begin{tabular}{@{}c@{}}{\includegraphics[scale=0.6]{figs-new/Poisson.eps}}\end{tabular}
%%\end{center}
%%\end{figure}
%%\end{column}
%%\begin{column}{0.5\paperwidth}
%%\begin{figure}
%%\begin{center}
%\caption{(a) Time quantization. (b) Binary P-BC: $a_1=A_1s_1\Delta+O(\Delta^2)$, $a_2=A_2s_2\Delta+O(\Delta^2)$, $b_1=A_1(1+s_1)\Delta+O(\Delta^2)$, $b_2=A_2(1+s_2)\Delta+O(\Delta^2)$.}
%\label{fig:BPBC}
%\end{center}
%\end{figure}
%%\end{column}
%%\end{columns}

%Lapidoth et al.~\cite{Lapidoth--Telatar--Urbanke2003} also showed that for the P-BC with $s_1 \le s_2$, $Y_2$ is a degraded version of $Y_1$ if $\a = A_1/A_2 \geq 1$; hence the capacity region in this parameter range is achieved via superposition coding. %%%
%Lapidoth, Telatar, and Erbanke~\cite{Lapidoth--Telatar--Urbanke2003} established the condition under which the P-BC is degraded; hence the capacity region in this range is achieved via superposition coding. 
In this paper, we show the surprising fact that superposition coding is optimal for almost all channel parameter values. We find explicit analytical expressions for the ranges of parameter values in which the channel is less noisy and more capable~\cite{Korner--Marton1975}. We then introduce the new class of {\em effectively less noisy} broadcast channels for which superposition coding is optimal and show that it includes the less noisy class but is not strictly included in the more capable class. The key idea is that the less noisy condition needs to hold only for channel input distributions that attain the maximum weighted sum rate. We find explicit analytical expressions for the ranges of parameter values for which the P-BC is effectively less noisy. By further strengthening the effectively less noisy condition, we show numerically that superposition coding can be optimal even when the channel is not more capable or effectively less noisy. Finally we show that for the remaining set of parameter values. there is a gap between Marton's inner bound~\cite{Marton1979} and the UV outer bound~\cite{Nair--EG2007}.

The rest of the paper is organized as follows. In Section~\ref{sec:superposition} we review superposition coding and the optimal superposition coding inner bound for binary input broadcast channels, and summarize known classes of broadcast channels for which superposition coding inner bound is tight. In Section~\ref{sec:lnmc}, we establish the parameter ranges for which a P-BC is less noisy and more capable. %superposition coding is optimal. 
In Section~\ref{sec:eff}, we introduce the new class of effectively less noisy broadcast channels for which superposition coding is also optimal, and obtain the parameter range for which a P-BC is effectively less noisy. %extend the definitions of less noisy, more capable, and effectively less noisy broadcast channels to P-BC.
In Section~\ref{sec:gap}, we show that for certain range of parameter values, there is a gap between Marton's inner bound and the UV outer bound. Hence, the capacity region of the P-BC is still not known in general. 
In Section~\ref{sec:extensions}, we extend our results to the average power constraint case. Finally in Section~\ref{sec:remarks}, we remark on the optimality of superposition coding for general binary input broadcast channels. We demonstrate via an example that our intuition about when superposition coding is optimal for the broadcast channel can be quite misleading. 

%-------------------------------------------------------------------------------------------------------------------
%\section{Optimality of superposition coding}\label{sec:optimality}

%We first define what we mean by superposition coding and present We then evaluate the conditions for the less noisy and more capable and show that the P-BC is less noisy for almost all channel parameter values.

\section{Superposition coding inner bound}\label{sec:superposition}
Consider a 2-receiver discrete memoryless broadcast channel $p(y_1,y_2|x)$. The superposition coding scheme~\cite{Cover1972} is motivated by broadcast channels for which one receiver is ``stronger" than the other. This suggests a layered coding approach in which the weaker receiver (say $Y_2$) recovers only its own message $M_2$ carried by the auxiliary random variable $U$, while the stronger receiver ($Y_1$) recovers both messages $(M_1,M_2)$ carried by $X$. This coding scheme leads to the  inner bound on the capacity region of the general discrete memoryless broadcast (DM-BC) channel $p(y_1,y_2|x)$ that consists of all rate pairs $(R_1,R_2)$ such that~\cite{Bergmans1973}
\begin{align}\begin{split}\label{region:more capable}
R_1 &< I(X;Y_1|U),\\
R_2 &< I(U;Y_2),\\
R_1+R_2 &< I(X;Y_1)
\end{split}\end{align}
for some pmf $p(u,x)$, and $|\Ucal| \le |\Xcal|+1$. 

%%FROM HERE
Let $\Rr_1$ denote the region~\eqref{region:more capable}. Note that a second superposition coding inner bound can be readily obtained by exchanging $Y_1$ and $Y_2$ and $R_1$ and $R_2$ in~\eqref{region:more capable}. Denoting the second region by $\Rr_2$,  it can be shown %(see Appendix~\ref{}) 
 that if the capacity of the channel $p(y_1|x)$, $C_1$, is larger than the capacity of the channel $p(y_2|x)$, $C_2$,  then $\Rr_2 \subseteq \{(R_1,R_2)\colon R_1/C_1+R_2/C_2 \le 1\}$.% inside the time division inner bound. %$\Rr_2 \subset \Rr_2$ (and vice versa). Hence, for any given broadcast channel we need to consider only one of these two regions. 

\begin{remark}\textnormal{
In~\cite{Wang--Sasoglu--Bandemer--Kim2013}, Wang, Sasoglu, Bandemer, and Kim compared two superposition encoding schemes for the broadcast channel. The first is the $UX$ scheme above, and the second is the $(U,V)$ scheme in which $M_1$ is carried by $V$, $M_2$ is carried by $U$, and $X$ is a function of $(U,V)$. The weaker receiver $Y_2$ again recovers $M_2$ carried by $U$ and the stronger receiver $Y_1$ recovers $M_1$ carried by $V$. They showed that the optimal inner bound achieved by the $UV$ scheme can be strictly larger than  that achieved by the $UX$ scheme. It turns out, however, that if the broadcast channel has binary input, the optimal inner bound (assuming $C_1 \ge C_2$) is $\co\left(\Rr_1 \cup \{(0,C_2)\}\right)$~\cite{Nair--Kim--EG2015}. Since in this paper we are concerned with the binary P-BC, which has binary input and outputs, we focus only on conditions under which either $\Rr_1$ or $\Rr_2$ is optimal.}%it suffices to consider only $\Rr_1$ and $\Rr_2$.}
\end{remark}
%%TO HERE

It is well known that region $\Rr_1$ is tight for the following classes of DM-BC. 

\begin{definition}[Degraded broadcast channel~\cite{Cover1972}]\textnormal{For a DM-BC $p(y_1, y_2|x)$ $Y_2$ is said to be a degraded version of $Y_1$  if there exists a random variable $Y_1'$ such that $Y_1'|\{X=x\} \sim p_{Y_1|X}(y'_1|x),$ i.e., $Y_1'$ has the same conditional pmf as $Y_1$ (given $X$) and $X \to Y_1' \to Y_2$ form a Markov chain.
}\end{definition}

\begin{definition}[Less noisy channel~\cite{Korner--Marton1975}]\textnormal{For a DM-BC $p(y_1,y_2|x)$ receiver $Y_1$ is said to be \emph{less noisy} than receiver $Y_2$ if $I(U;Y_1) \geq I(U;Y_2)$ for all $p(u,x)$. 
}\end{definition}
Van-Dijk~\cite{van-Dijk1997} showed that receiver $Y_1$ is less noisy than receiver $Y_2$ if $I(X;Y_1) - I(X;Y_2)$ is concave in $p(x)$, or equivalently, $I(X;Y_1)-I(X;Y_2)$ is equal to its upper concave envelope $\CC[I(X;Y_1) - I(X;Y_2)]$ (the smallest concave function that is greater than or equal to $I(X;Y_1)-I(X;Y_2)$). As we will see, this alternative condition of less noisy is significantly simpler to compute than the original condition.

\begin{definition}[More capable channel~\cite{Korner--Marton1975}]\textnormal{For a DM-BC $p(y_1,y_2|x)$ receiver $Y_1$ is said to be \emph{more capable} than receiver $Y_2$ if $I(X;Y_1) \geq I(X;Y_2)$ for all $p(x)$. 
}\end{definition}
The more capable condition can also be recast in terms of the concave envelope: Receiver $Y_1$ is more capable than $Y_2$ if $\CC[I(X;Y_2)-I(X;Y_1)]=0$ for every $p(x)$. %By definition of the upper concave envelope, $\CC[I(X;Y_2) - I(X;Y_1)] = 0$ implies $I(X;Y_2) - I(X;Y_1) \le 0$. To show the other direction, note that $I(X;Y_2) - I(X;Y_1) \le 0$ implies $\CC[I(X;Y_2) - I(X;Y_1)] \le 0$ by definition and $\CC[I(X;Y_2) - I(X;Y_1)] \ge 0$ since for any deterministic X, $I(X;Y_2) - I(X;Y_1)=0$. \vspace{1mm}

It is also well known that degraded implies less noisy which implies more capable~\cite{Korner--Marton1975}, but the converses do not always hold. In~\cite{Nair2010}, Nair generalized the notions of less noisy and more capable. Let $\Pc_o$ be a class of pmfs $p(u,v,x)$ such that for any triple of random variables $(U,V,X) \sim p(u,v,x)$, there exists a pmf $q(\tilde{u},\tilde{v},x)$ such that 
\begin{align}\begin{split}\label{setPo}
q(x) &\in \Pc_o,\\
I(V;Y_1)_p &\le I(\tilde{V};Y_1)_q,\\
I(U;Y_2)_p &\le I(\tilde{U};Y_2)_q,\\
I(V;Y_1)_p +I(X;Y_2|V)_p &\le I(\tilde{V};Y_1)_q+I(X;Y_2|\tilde{V})_q ,\\
I(U;Y_2)_p+I(X;Y_1|U)_p &\le I(\tilde{U};Y_2)_q+I(X;Y_1|\tilde{U})_q.
\end{split}\end{align}
The notation $I(V;Y_1)_p$ refers to the mutual information between $V$ and $Y_1$ when the input is generated according to $p(u,v,x)$.

\begin{definition}[Essentially less noisy~\cite{Nair2010}]\textnormal{
For a DM-BC $p(y_1,y_2|x)$, $Y_1$ is said to be \emph{essentially less noisy} than $Y_2$  if there exists a sufficient class of pmfs $\Pc_o$ as defined in~\eqref{setPo} such that $I(U;Y_1) \geq I(U;Y_2)$ for every $p(x) \in \Pc_o$ and all $U \to X \to (Y_1,Y_2)$.
}\end{definition}

\begin{definition}[Essentially more capable~\cite{Nair2010}]\textnormal{
For a DM-BC $p(y_1,y_2|x)$, $Y_1$ is said to be \emph{essentially more capable} than $Y_2$  if there exists a sufficient class of pmfs $\Pc_o$ as defined in~\eqref{setPo} such that $I(X;Y_1|U) \geq I(X;Y_2|U)$ for every $p(x) \in \Pc_o$ and all $U \rightarrow X \rightarrow (Y_1,Y_2)$.
}\end{definition}

It can be easily seen from the definitions that less noisy implies essentially less noisy and more capable implies essentially more capable. However the converses do not always hold. Also it is shown in~\cite{Nair2010} that essentially less noisy neither implies nor is implied by essentially more capable. The capacity region for the essentially less noisy and the essentially more capable classes~\cite{EG1979} is region $\Rr_1$ as defined in~\eqref{region:more capable}. Also note that when the channel is essentially less noisy, the sum bound in~\eqref{region:more capable} is always inactive.%%Reference?

\section{Less noisy and more capable P-BC}\label{sec:lnmc}
We evaluate the conditions for the less noisy and more capable stated in the previous section for the P-BC and show that the P-BC is less noisy for almost all channel parameter values. We assume without loss of generality that $s_1 \le s_2$. The results for $s_1 > s_2$ can be similarly established. Also assume that $A_1 = \a$ and $A_2 = 1$. The result for $A_2 \neq 1$ can be established using the fact that the capacity for $(A_1,A_2) = (a_1,a_2)$ is equal to $a_2$ times the capacity region for $(A_1,A_2) = (a_1/a_2,1)$.

In~\cite{Lapidoth--Telatar--Urbanke2003}, Lapidoth et al. showed that $Y_2$ is a degraded version of $Y_1$ if $\a \ge 1$. This can be seen either by directly inspecting the P-BC channel model or its binary P-BC counterpart. %%%
%....define $\a=A_1/A_2$ ... 
% The capacity for $(A_1,A_2) = (a_1,a_2)$ is equal to $A_2$ times the capacity region for $(A_1,A_2) = (A_1/A_2,1)$. Thus we only need to consider the case $(A_1,A_2) = (\a,1)$.

%We assume $A_1 = \alpha$ and $A_2 = 1$. The capacity for $(A_1,A_2)$ is the same as $A_2$ times the capacity region for $(A_1,A_2)=(\a,1)$ where $\a = A_1/A_2$. 

We now determine the conditions under which the P-BC is less noisy and more capable.
%\subsection{Less noisy and more capable P-BC}
 
We use the binary P-BC with the parameters given in~\eqref{BPBC} to extend the definitions of less noisy and more capable to the P-BC. Let
\[
I_i(q)= \lim_{\Delta \to 0} \frac{1}{\Delta} I(X;Y_i^{\Delta}),\; X \sim \Bern(q),\; i=1,2, 
\]
where $X$ and $Y_i^\Delta$ are the input and outputs of binary P-BC as depicted in Figure~\ref{fig:BPBC}-(b). 
Using Wyner's results~\cite{Wyner1988a, Wyner1988b}, it immediately follows that
\begin{align}\begin{split}\label{I_i}
I_1(q) &= \a \big(-(q+s_1)\log(q+s_1) + q(1+s_1)\log(1+s_1)+(1-q)s_1\log(s_1)\big),\\
I_2(q) &= -(q+s_2)\log(q+s_2) + q(1+s_2)\log(1+s_2)+(1-q)s_2\log(s_2),
\end{split}\end{align}
 and that $I_i(q)$ is maximized at 
 \begin{align}
 q_i= \frac{(1+s_i)^{1+s_i}}{es_i^{s_i}} -s_i\label{qi}
 \end{align}
 for $i=1,2$.
% Note that $I_1(q)$ scales on $\a$. %Need to rephrase.

%\begin{lemma}[van-Dijk '97]\label{ln-equivalent}
%\textnormal{Receiver $Y_1$ is less noisy than $Y_2$ if $I(X;Y_1) - I(X;Y_2)=\CC[I(X;Y_1) - I(X;Y_2)]$ for every $p(x)$. 
%}\end{lemma}
%\begin{proof}
%Receiver $Y_1$ less noisy than $Y_2$ if $I(U;Y_1) \ge I(U;Y_2)$ for every $p(u,x)$, $|\Uc| \le |\Xc|+1$. Equivalently, $Y_1$ is less noisy than $Y_2$ if $\max_{p(u,x)} \big(I(U;Y_2) - I(U;Y_1)\big)=0$, $|\Uc| \le |\Xc|+1$.
%
%Consider
%\begin{align*}
%\max_{p(u,x)} \big(I(U;Y_2) - I(U;Y_1)\big) &= \max_{p(u,x)} \big(I(X;Y_2) - I(X;Y_2|U) - I(X;Y_1) + I(X;Y_1|U)\big)\\
%&= \max_{p(x)} \big(\max_{p(u|x):\,|\Uc|\le |\Xc| +1}\big(I(X;Y_1|U) - I(X;Y_2|U) \big)- \big(I(X;Y_1) - I(X;Y_2)\big)\big)\\
%&= \max_{p(x)} \left(\CC[I(X;Y_1) - I(X;Y_2)] - \big(I(X;Y_1) - I(X;Y_2)\big)\right),
%\end{align*}
%where $\mathfrak{C}[f(x)]$ denotes the upper concave envelope of $f(x)$. Thus $\max_{p(u,x)} \big(I(U;Y_2) - I(U;Y_1)\big)=0$ if and only if 
%\[\CC[I(X;Y_1) - I(X;Y_2)] = I(X;Y_1) - I(X;Y_2),\]
%i.e., $I(X;Y_1) - I(X;Y_2)$ is concave in $p(x)$.
%\end{proof}

% \subsection{Less noisy/more capable Poisson broadcast channels}

We define the less noisy condition for P-BC as follows. 
\begin{definition}\textnormal{For the 2-receiver P-BC, receiver $Y_1$ is less noisy than $Y_2$ if $I_1(q) - I_2(q)$ is concave in $q \in [0,1]$, i.e., if $\CC[I_1(q) - I_2(q)]=I_1(q) - I_2(q)$.}
\end{definition}
Similarly we define the more capable condition for P-BC as follows.
\begin{definition}\textnormal{For the 2-receiver P-BC, receiver $Y_1$ is more capable than $Y_2$ if $I_1(q) - I_2(q) \ge 0$ for every $q \in [0,1]$, i.e., if $\CC[I_2(q) - I_1(q)]=0$.
}\end{definition}

%\begin{remark}\textnormal{Note that $Y_1$ being more capable than $Y_2$ does not imply the mutual information between $X(t)$ and $Y_1(t)$ is greater than or equal to the mutual information between $X(t)$ and $Y_2(t)$ for any $X(t)$. ...}
%\end{remark}

To establish the parameter ranges for less noisy and more capable P-BC, we define the following breakpoints of $\a$:
\begin{align*}
\a_1 &= \frac{1+s_1}{1+s_2},\\
\a_2 &= \frac{s_2 \log(1+1/s_2) -1}{s_1 \log(1+1/s_1)-1},\\
\a_3 &= \frac{(1+s_2)\log(1+1/s_2)-1}{(1+s_1)\log(1+1/s_1)-1},\\
\a_4 &= \frac{s_1}{s_2}.
\end{align*}
%It can be shown that for $s_1 \le s_2$, $\a_4 \le \a_3 \le \a_2 \le \a_1$.

% To prove this theorem, we need the following lemma which characterizes the upper concave envelope of $(I_1(q) - I_2(q))$ and $(I_2(q) - I_1(q))$. 
We also need the following lemma which characterizes the upper concave envelope of $(I_1(q) - I_2(q))$ and $(I_2(q) - I_1(q))$.
%----------------------------------------
\begin{lemma}\label{lemma:envelope}\textnormal{Consider a P-BC.
\begin{enumerate} %% Maybe it's good to use r(\a) instead of r to denote the dependence of r on \a.
\item[1.]
Let
\begin{align}\label{def:t}\allowdisplaybreaks
t = \begin{cases}
0 &\text{ if } 0\le \a \leq \a_3,\\ 
g_1^{-1}(\a) &\text{ if } \a_3 < \a < \a_1,\\
1 &\text{ if } \a \geq \a_1,
\end{cases}
\end{align}
where
\begin{align}
g_1(x) = \frac{(1+s_2) \log \big((1+s_2)/(x+s_2)\big) - 1+ x}{(1+s_1) \log \big((1+s_1)/(x+s_1)\big) - 1+x}.\label{g1}
\end{align}
%Note that $g_1^{-1}(\a_{02})=0$ and $g_2^{-1}(\a_{31})=1$. Then
Then, the upper concave envelope of $(I_1(q)-I_2(q))$ is
\begin{align*} 
\mathfrak{C}[I_1(q)-I_2(q)]=\begin{cases}
I_1(q)-I_2(q) &\text{ for } 0\leq q \leq t,\\
(1-q)(I_1(t)-I_2(t))/(1-t)&\text{ for } q > t.
\end{cases}
\end{align*}
%Indeed $\mathfrak{C}[I_1(q)-I_2(q)] > I_1(q)-I_2(q)$ for $q>t$.
\smallskip
\item[2.]  Let
\begin{align}\label{def:r}
r = \begin{cases}
0 &\text{ if } 0\le \a \leq \a_4,\\ 
g_2^{-1}(\a) &\text{ if } \a_4 < \a < \a_2,\\
1 &\text{ if } \a \geq \a_2,
\end{cases}
\end{align}
where
\begin{align}
g_2(x) = \frac{s_2\log(1+x/s_2)-x}{s_1\log(1+x/s_1)-x}.\label{g2}
\end{align}
%Note that $g_1^{-1}(\a_{01})=0$ and $g_1^{-1}(\a_{2})=1$. Then
Then, the upper concave envelope of $(I_2(q)-I_1(q))$ is
\begin{align*} 
\mathfrak{C}[I_2(q)\!-\!I_1(q)]=\begin{cases}
(q/r)(I_2(r)-I_1(r))&\text{ for }0\leq q < r,\\
I_2(q)-I_1(q) &\text{ for }q \geq r.
\end{cases}
\end{align*}
%Indeed $\mathfrak{C}[I_2(q)\!-\!I_1(q)] > I_2(q)\!-\!I_1(q)$ for $0<q<r$.
\end{enumerate}
}\end{lemma}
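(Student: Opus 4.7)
The plan is to analyze the shape of $f(q) = I_1(q) - I_2(q)$ on $[0,1]$ and then read off the upper concave envelope from that shape. From the closed-form expressions for $I_1, I_2$, one has $f(0) = f(1) = 0$ directly, and
\[
f''(q) = -\frac{\a}{q+s_1} + \frac{1}{q+s_2},
\]
which changes sign at most once on $(0,1)$, vanishing at $q^\star = (\a s_2 - s_1)/(1-\a)$. Checking whether $q^\star$ lies in $(0,1)$ yields three regimes: $f$ is concave on $[0,1]$ when $\a \ge \a_1$; $f$ is convex on $[0,1]$ when $\a \le \a_4$; and $f$ is concave-then-convex (inflection at $q^\star$) when $\a_4 < \a < \a_1$. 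Because $f$ vanishes at both endpoints and has at most one inflection, $\CC[f]$ must have the structural form in the statement of Part~1, namely $f$ on a left interval $[0,t]$ joined to the chord from $(t,f(t))$ to $(1,0)$ on $[t,1]$.

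In the interior regime, $t$ is pinned down by the tangency condition $f'(t)(1-t) + f(t) = 0$. Writing $h(q,s) = -(q+s)\log(q+s) + q(1+s)\log(1+s) + (1-q)s\log(s)$, a direct algebraic reduction collapses $h(q,s) + (1-q)h'(q,s)$ to $(1+s)\log\bigl((1+s)/(q+s)\bigr) - 1 + q$, so the tangency condition becomes $\a = g_1(t)$. It then remains to verify that $g_1(0) = \a_3$ (by direct substitution), that $\lim_{q\to 1} g_1(q) = \a_1$ (by a second-order Taylor expansion, since both numerator and denominator vanish to first order at $q=1$), and that $g_1$ is monotone so that $t = g_1^{-1}(\a)$ is well defined on $[\a_3,\a_1]$. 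The boundary cases then line up with the formula: for $\a \ge \a_1$, $f$ is globally concave and $\CC[f] = f$ (formally $t = 1$); for $\a \le \a_3$, one checks that $f'(0) \le 0$ and then uses $f(0) = f(1) = 0$ together with the at-most-single inflection to force $f \le 0$ on $[0,1]$, so $\CC[f] \equiv 0$ (formally $t = 0$).

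Part~2 proceeds in parallel, applied to $-f$. The curvature is flipped, so in the intermediate regime $-f$ is convex-then-concave and its upper concave envelope takes the dual shape: a chord from $(0,0)$ to $(r, I_2(r)-I_1(r))$ on $[0,r]$ followed by $I_2 - I_1$ itself on $[r,1]$. The tangency condition at $r$ becomes $f(r) = r f'(r)$, and the analogous collapsing identity $h(q,s) - q h'(q,s) = -s\log(1+q/s) + q$ rewrites it as $\a = g_2(r)$. A Taylor expansion at $q = 0$, where each side behaves like $-q^2/(2 s_i)$, gives $g_2(0) = \a_4$, while $g_2(1) = \a_2$ by direct substitution; the boundary regimes $\a \le \a_4$ and $\a \ge \a_2$ mirror those of Part~1.

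The main obstacles I anticipate are (a) proving strict monotonicity of $g_1$ and $g_2$ on $[0,1]$, which reduces to a sign analysis I would handle by exploiting the convexity of $(1+s)\log((1+s)/(q+s)) - 1 + q$ in $q$ together with the structure of its ratio over $s_1, s_2$; and (b) rigorously confirming that in the interior regime the candidate chord/tangent line indeed lies above $f$ (respectively below $-f$) on the stated sub-interval. For (b), once value and slope are matched at the tangent point, any further crossing between the chord and $f$ is precluded by the single-inflection property together with the endpoint values $f(0) = f(1) = 0$, so the candidate function is automatically the pointwise smallest concave majorant, i.e.\ the concave envelope.
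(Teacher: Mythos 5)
Your proposal is correct and follows essentially the same route as the paper: a single-inflection analysis of the second derivative of $I_1(q)-I_2(q)$, followed by locating the tangent point, where your tangency condition $f(t)+(1-t)f'(t)=0$ is exactly the paper's condition that $q\mapsto (I_1(q)-I_2(q))/(1-q)$ be maximized at $t$, and your collapsing identity for $h(q,s)+(1-q)h'(q,s)$ is the same algebra the paper uses to factor that derivative into $(\a-g_1(q))$ times a positive term (likewise for Part 2 with $f(r)-rf'(r)$ and $g_2$). The one item you flag as an obstacle, strict monotonicity of $g_1$ and $g_2$, is asserted without detailed proof in the paper as well, so no substantive gap separates the two arguments.
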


%----------------------------------------
The proof of Lemma~\ref{lemma:envelope} is in Appendix~\ref{proof:envelope}. Figure~\ref{fig:ap0}-(a) plots $t$ vs $\a$. The shaded area is where $\mathfrak{C}[I_1(q)-I_2(q)] = I_1(q)-I_2(q)$. Figure~\ref{fig:ap0}-(b) plots $r$ vs $\a$. The shaded region is where $\mathfrak{C}[I_2(q)-I_1(q)] = I_2(q)-I_1(q)$. Figure~\ref{illustration_lemma1} plots  $I_1(q)-I_2(q)$ and $\CC[I_1(q)-I_2(q)]$ for $\a = 0.3\a_3 + 0.7 g_1(q_2)$ %some $\a \in [\a_3,g_1(q_2)]$ 
and $\a = g_1(q_2)$ for $(s_1,s_2) = (0.1,1)$. Figure~\ref{illustration_lemma2} plots  $I_2(q)-I_1(q)$ and $\CC[I_2(q)-I_1(q)]$ for $\a = 0.3\a_4 + 0.7 g_2(q_1)$ %some $\a \in [\a_4,g_2(q_1)]$
 and $\a = g_2(q_1)$ for $(s_1,s_2) = (0.1,1)$.

\begin{figure}[htpb]
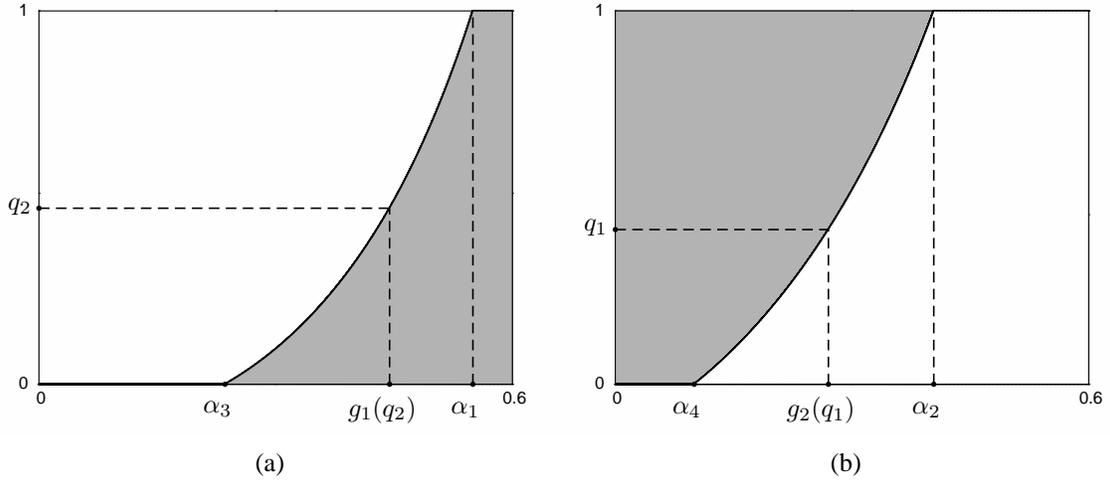

\begin{center}
\begin{tabular}{ccc}
\psfrag{y}[t]{$$}
\psfrag{0}[t]{$0$}\psfrag{1}[t]{$1$}\psfrag{0.6}[t]{$$}
\psfrag{q1}[t]{$~q_1$}
\psfrag{q2}[t]{$~q_2$}
\psfrag{a01}[t]{$\a_4$}
\psfrag{a1}[t]{$g_2(q_1)$}
\psfrag{a2}[t]{$\a_2$}
\psfrag{a02}[t]{$\a_3$}
\psfrag{a3}[t]{$g_1(q_2)$}
\psfrag{a31}[t]{$\a_1$}
\psfrag{M}[t]{$(a)$}
\psfrag{N}[t]{$(b)$}
\includegraphics[scale=0.4]{diff_v7_2.eps}&&
\psfrag{y}[t]{$$}
\psfrag{0}[t]{$0$}\psfrag{1}[t]{$1$}\psfrag{0.6}[t]{$$}
\psfrag{q1}[t]{$~q_1$}
\psfrag{q2}[t]{$~q_2$}
\psfrag{a01}[t]{$\a_4$}
\psfrag{a1}[t]{$g_2(q_1)$}
\psfrag{a2}[t]{$\a_2$}
\psfrag{a02}[t]{$\a_3$}
\psfrag{a3}[t]{$g_1(q_2)$}
\psfrag{a31}[t]{$\a_1$}
\psfrag{M}[t]{$(a)$}
\psfrag{N}[t]{$(b)$}
\includegraphics[scale=0.4]{diff_v7_1.eps}\\&&\\
(a) && (b)
\end{tabular}
\caption{(a) Plot of $t$ vs $\a$ for $s_1=0.1, s_2=1$. The shaded area is where $\mathfrak{C}[I_1(q)-I_2(q)]=I_1(q)-I_2(q)$, i.e., $0 \le q \le t$. (b) Plot of $r$ vs $\a$ for $s_1=0.1, s_2=1$. The shaded area is where $\mathfrak{C}[I_2(q)-I_1(q)]=I_2(q)-I_1(q)$, i.e., $r \le q \le 1$.}
\label{fig:ap0}
\end{center}
\end{figure}

\begin{figure}[htpb]
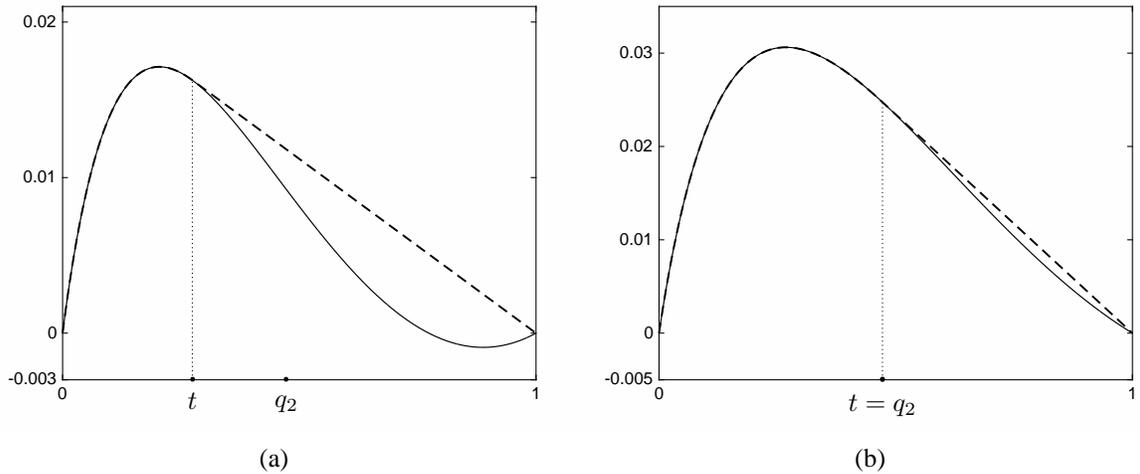

\begin{center}
\begin{tabular}{ccc}
\psfrag{a}[t]{$q$}
\psfrag{k}[t]{$q_2$}
\psfrag{q2}[t]{$t=q_2$}
\psfrag{t}[t]{$t$}
\includegraphics[scale=0.4]{illustration_lemma_2a_v4.eps}&&
\psfrag{a}[t]{$q$}
\psfrag{k}[t]{$q_2$}
\psfrag{q2}[t]{$t=q_2$}
\psfrag{t}[t]{$t$}
\includegraphics[scale=0.4]{illustration_lemma_2b_v4.eps}\\&&\\
(a) && (b)
\end{tabular}
\caption{Plots of $I_1(q)-I_2(q)$ (solid) and $\mathfrak{C}[I_1(q)-I_2(q)]$ (dashed line) vs $q$ for (a) $\a = 0.3\a_3+0.7 g_1(q_2)$ and (b) $\a = g_1(q_2)$ ($s_1=0.1, s_2=1$).}
\label{illustration_lemma1}
\end{center}
\end{figure}

\begin{figure}[htpb]
\begin{center}
\begin{tabular}{ccc}
\psfrag{a}[t]{$q$}
\psfrag{s}[t]{$r=q_1$}
\psfrag{q1}[t]{$q_1$}
\psfrag{r}[t]{$r$}
\includegraphics[scale=0.4]{illustration_lemma_1a_v4.eps}&&
\psfrag{a}[t]{$q$}
\psfrag{s}[t]{$r=q_1$}
\psfrag{q1}[t]{$q_1$}
\psfrag{r}[t]{$r$}
\includegraphics[scale=0.4]{illustration_lemma_1b_v4.eps}\\&&\\
(a) && (b) 
\end{tabular}
\caption{Plots of $I_2(q)-I_1(q)$ (solid) and $\mathfrak{C}[I_2(q)-I_1(q)]$ (dashed line) vs $q$ for (a) $\a=0.3\a_4+0.7 g_2(q_1)$ and (b) $\a=g_2(q_1)$ ($s_1=0.1, s_2=1$).}
\label{illustration_lemma2}
\end{center}
\end{figure}
%----------------------------------------  

We are  now ready to state the conditions for less noisy and more capable.
%Values of alpha such that P-BC is less noisy.
\begin{theorem}\label{thm:lnmc}
\textnormal{ For $s_1 \le s_2$:
\begin{enumerate}
\item[1.] If $\a \ge \a_1$, $Y_1$ is less noisy than $Y_2$ and the capacity region is the set of rate pairs $(R_1,R_2)$ such that
\begin{align}\begin{split}\label{thm2-part1}
R_1 &\le \beta I_1(p_0),\\
R_2 &\le I_2(\beta p_0+\bar{\beta}) - \beta I_2(p_0)
\end{split}\end{align}
for some $0 \le \beta, p_0 \le 1$. %for some $q, r \in [0,1]$, which is achieved using superposition coding for binary $U$ and $p(x|u)$ being a Z-channel.\\
\item[2.] If $\a_2 \le \a \le \a_1$, $Y_1$ is more capable than $Y_2$ and the capacity region is the set of rate pairs $(R_1,R_2)$ such that
\begin{align}\begin{split}\label{thm2-part2}
R_1 &\leq \beta_0 I_1(p_0) + \beta_1 I_1(p_1) + \beta_2 I_1(p_2),\\
R_2 &\leq I_2(p) - \beta_0 I_2(p_0) - \beta_1 I_2(p_1) - \beta_2 I_2(p_2),\\
R_1+R_2 &\leq I_1(p)
\end{split}\end{align}
for some $0 \le \beta_0, \beta_1, \beta_2, p_0, p_1, p_2 \le 1$, where $\beta_0+\beta_1+\beta_2=1$ and $p=\beta_0 p_0 + \beta_1 p_1 + \beta_2 p_2$.\\
\item[3.] If $0 \le \alpha \le \a_4$, $Y_2$ is less noisy than $Y_1$ and the capacity region is the set of rate pairs $(R_1,R_2)$ such that
\begin{align}\begin{split}\label{thm2-part3}
R_1 &\le I_1(\beta p_0) - \beta I_1(p_0),\\
R_2 &\le \beta I_2(p_0)
\end{split}\end{align}
for some $0 \le \beta, p_0 \le 1$. 
\item[4.] If $\a_4 \le \alpha \le \a_3$, $Y_2$ is more capable than $Y_1$ and the capacity region is the set of rate pairs $(R_1,R_2)$ such that
\begin{align*}
R_1 &\leq I_1(p) - \beta_0 I_1(p_0) - \beta_1 I_1(p_1) - \beta_2 I_1(p_2),\\
R_2 &\leq \beta_0 I_2(p_0) + \beta_1 I_2(p_1) + \beta_2 I_1(p_2),\\
R_1+R_2 &\leq I_2(p)
\end{align*}
for some $0 \le \beta_0, \beta_1, \beta_2, p_0, p_1, p_2 \le 1$, where $\beta_0+\beta_1+\beta_2=1$ and $p=\beta_0 p_0 + \beta_1 p_1 + \beta_2 p_2$.
\end{enumerate}
}\end{theorem}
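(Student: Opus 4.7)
The plan is to handle all four parts by a unified three-step recipe: (i) apply Lemma~\ref{lemma:envelope} to verify the less noisy or more capable condition on the stated parameter range, (ii) invoke the classical result that the superposition coding region $\Rr_1$ (or $\Rr_2$) is the capacity region for such channels, and (iii) evaluate this region for the binary P-BC with a suitable choice of auxiliary $U$, passing to the Poisson limit $\Delta \to 0$ as in~\cite{Lapidoth--Telatar--Urbanke2003}.

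For Parts~1 and~3 (less noisy), I would use Van-Dijk's reformulation that $Y_1$ is less noisy than $Y_2$ iff $I_1(q)-I_2(q)$ equals its upper concave envelope. In Part~1, $\a \ge \a_1$ gives $t=1$ in Lemma~\ref{lemma:envelope}, hence $\CC[I_1(q)-I_2(q)]=I_1(q)-I_2(q)$ on $[0,1]$, so $Y_1$ is less noisy than $Y_2$; Part~3 is symmetric, with $r=0$ from $\a \le \a_4$ giving $Y_2$ less noisy than $Y_1$. Under less noisy, the sum bound in $\Rr_1$ is redundant, leaving only the two single-rate bounds. Evaluating them with $U$ binary, $\Pr(U=0)=\beta$, $X|\{U=0\}\sim \Bern(p_0)$, and $X|\{U=1\}\equiv 1$ (so $p_1=1$), and using $I_i(0)=I_i(1)=0$, the bounds become $I(X;Y_1|U)=\beta I_1(p_0)$ and $I(U;Y_2)=I_2(\beta p_0+\bar\beta)-\beta I_2(p_0)$, recovering~\eqref{thm2-part1}. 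The analogous choice with $p_1=0$ in the swapped region $\Rr_2$ yields~\eqref{thm2-part3}.

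For Parts~2 and~4 (more capable): In Part~2, $\a_2 \le \a \le \a_1$ yields $r=1$, so $\CC[I_2(q)-I_1(q)]=0$ on $[0,1]$; since $I_2-I_1 \le \CC[I_2-I_1]=0$, we have $I_2(q) \le I_1(q)$ pointwise, i.e., $Y_1$ is more capable than $Y_2$. Part~4 is symmetric: $\a_4 \le \a \le \a_3$ gives $t=0$, hence $Y_2$ is more capable than $Y_1$. The capacity region is then $\Rr_1$ (resp.\ $\Rr_2$) with all three bounds in~\eqref{region:more capable} now active. For binary $X$ the standard cardinality bound gives $|\Ucal| \le 3$; taking $U \in \{0,1,2\}$ with $\Pr(U=i)=\beta_i$, $X|\{U=i\}\sim \Bern(p_i)$, and writing $p=\sum_i \beta_i p_i$, the three bounds evaluate to $\sum_i \beta_i I_1(p_i)$, $I_2(p)-\sum_i \beta_i I_2(p_i)$, and $I_1(p)$, matching~\eqref{thm2-part2}. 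Part~4 follows identically.

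The main obstacle is justifying the restriction $p_1 \in \{0,1\}$ in step~(iii) for Parts~1 and~3: one must show this 2-parameter subfamily already exhausts $\Rr_1$ rather than giving a strict subset. By Lagrangian duality this reduces, for every $\lambda \ge 0$, to the identity
\begin{equation*}
\max_{p(u,x)} \bigl[\lambda I(X;Y_1|U) + I(U;Y_2)\bigr] = \max_{\bar p \in [0,1]} \bigl[I_2(\bar p) + \CC[\lambda I_1 - I_2](\bar p)\bigr],
\end{equation*}
with the outer maximum attained by a two-point distribution on $p_U$ having one atom at $q=1$. A case analysis on $\lambda$, using the concavity of $I_1, I_2$, the boundary conditions $I_i(0)=I_i(1)=0$, and the less noisy hypothesis that $I_1-I_2$ is concave, should show that for $\lambda \ge 1$ no splitting is needed (yielding the corner $(C_1,0)$) and that for $\lambda \in [0,1)$ the envelope of $\lambda I_1 - I_2$ consists of the function itself on $[0,q^*(\lambda)]$ joined to a chord to $(1,0)$ on $[q^*(\lambda),1]$, exactly realizing the $p_1=1$ parameterization. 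Consistency at the transition points $\a_1, \a_3, \a_4$ follows by continuity of the envelope construction, as visible in Figures~\ref{fig:ap0}--\ref{illustration_lemma2}.
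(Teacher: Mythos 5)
Your outline follows the paper's proof almost step for step: Lemma~\ref{lemma:envelope} with $t=1$, $r=1$, $r=0$, $t=0$ gives the four channel orderings on the stated ranges; the more-capable parts are handled by evaluating $\Rr_1$ (resp.\ $\Rr_2$) with a ternary $U$ and passing to the $\Delta\to 0$ limit; and for the less-noisy parts the paper, like you, reduces the sufficiency of the Z-channel family to the support-function identity $\max_{p(u,x)}[\lambda I(X;Y_1|U)+I(U;Y_2)]=\max_p\big(I_2(p)+\CC[\lambda I_1(p)-I_2(p)]\big)$, computing $\CC[\lambda I_1-I_2]$ by reinterpreting $\lambda I_1$ as $I_1$ for gain $\lambda\alpha$ and applying Lemma~\ref{lemma:envelope} with $t'=t(\lambda\alpha)$. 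Your "function on $[0,q^*(\lambda)]$ joined to a chord to $(1,0)$" is exactly the paper's step $(a)$--$(b)$, realized by $(\beta,p_0)=(1,p)$ or $\big((1-p)/(1-t'),\,t'\big)$.

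There is, however, one genuine gap in the reduction you label "by Lagrangian duality." Matching all supporting hyperplanes of the full region $\Rr''$ (ternary $U$) against the Z-channel region $\Rr'$ only shows that $\Rr''$ and the closed convex hull of $\Rr'$ have the same support function; it does not by itself yield $\Rr'=\Rr''$, and the theorem asserts that the capacity region \emph{is} the set parameterized by $(\beta,p_0)$, which is a priori a union of rectangles with no reason to be convex. The paper closes this by proving $\Rr'$ is convex: for each $\mu\ge 0$ the function $I_2(p)+\CC[\mu I_1(p)-I_2(p)]$ is strictly concave (strict concavity of $I_2$ plus concavity of the envelope), so the maximizing $p^*$ is unique, and Lemma~\ref{lemma:envelope} then forces a unique $(\beta,p_0)$ with $\beta p_0+\bar\beta=p^*$ attaining $\CC[\mu I_1(p^*)-I_2(p^*)]$; hence no supporting line of $\co\{\Rr'\}$ can touch two distinct points of $\Rr'$, giving $\Rr'=\co\{\Rr'\}$, after which the Eggleston-type Lemma~\ref{subseteq} yields $\Rr'=\Rr''$. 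You should add this uniqueness/convexity step (or an equivalent argument) before concluding; everything else in your proposal is sound and matches the paper's route.
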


%---------------------------------------- Illustration of the theorem ** Z-Channel
The capacity region for $Y_1$ less noisy than $Y_2$ is the superposition rate region $\Rr_1$ without the sum rate bound for $U \sim \Bern(\beta)$ and $p(x|u)$ is the Z-channel shown in Figure~\ref{z-ch1}-(a) where $\beta, p_0 \in [0,1]$; and the capacity region for $Y_2$ less noisy than $Y_1$ is $\Rr_2$ for $U \sim \Bern(\beta)$ and $p(x|u)$ is the Z-channel shown in Figure~\ref{z-ch1}-(b) where $\beta, p_0 \in [0,1]$. 

%Combine these into a 2-column table
\begin{figure}[htpb]
\begin{center}
\begin{tabular}{cccc}
\psfrag{0}[r]{$1$}
\psfrag{1}[r]{$0$}
\psfrag{00}[l]{$1$}
\psfrag{11}[l]{$0$}
\psfrag{p}[b]{$\bar{p_0}$}
\psfrag{p'}[t]{$$}
\psfrag{x}[c]{$U$}
\psfrag{y}[c]{$X$}
\includegraphics[scale=0.5]{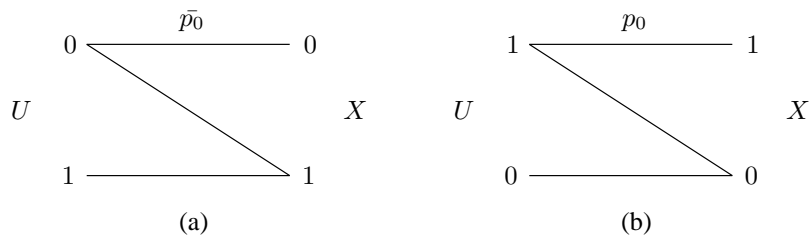}&&&
\psfrag{0}[r]{$0$}
\psfrag{1}[r]{$1$}
\psfrag{00}[l]{$0$}
\psfrag{11}[l]{$1$}
\psfrag{p}[b]{$p_0$}
\psfrag{p'}[t]{$$}
\psfrag{x}[c]{$U$}
\psfrag{y}[c]{$X$}
\includegraphics[scale=0.5]{z-channel.eps}\\
(a) &&& (b)
\end{tabular}
\end{center}
\caption{(a) $p(x|u)$ for $Y_1$ less noisy than $Y_2$. (b)  $p(x|u)$ for $Y_2$ less noisy than $Y_1$.}\label{z-ch1}
\end{figure}
%IS p_0 defined somewhere? !!

The capacity region for $Y_1$ more capable than $Y_2$ is the superposition rate region $\Rr_1$ with $U \in \{0,1,2\}$, $p_U(j)=\beta_j$, and $X|\{U=j\} \sim \Bern(p_j)$, $j=0,1,2$; and the capacity region for $Y_2$ more capable than $Y_1$ is the superposition rate region i$\Rr_2$ with $p_U(j)=\beta_j$ and $X|\{U=j\} \sim \Bern(p_j)$ for $j=0,1,2$.

%---------------------------------------- Illustration of the theorem ** Figure
Figure~\ref{fig:thm2thm3} illustrates the ranges of $\a$ for which the P-BC is degraded, less noisy and more capable. From Theorem~\ref{thm:lnmc} the P-BC is less noisy if $\a \le \a_4$ or $\a \ge \a_1$ and more capable if $\a \le \a_3$ or $\a \ge \a_2$. %EDIT: will show later in the appendix for order. Hence, it immediately follows that $\a_4 \le \a_3 \le \a_2 \le \a_1$ for $s_1 \le s_2$.

\begin{figure}[htpb]
\begin{center}
\psfrag{0}[t]{}%$0$}
\psfrag{1}[b]{}%$1$}
\psfrag{2}[b]{}%$2$}
\psfrag{3}[b]{}%$3$}
\psfrag{4}[b]{}
\psfrag{5}[b]{}%$4$}
\psfrag{6}[b]{}%$5$}
\psfrag{7}[b]{}%$6$}
\psfrag{8}[b]{}%$7$}
\psfrag{a}[t]{$\a$}
\psfrag{a1}[t]{$\a_4$}
\psfrag{a2}[t]{$\a_3$}
\psfrag{a3}[t]{}
\psfrag{a4}[t]{$\a_2$}
\psfrag{a5}[t]{}
\psfrag{a6}[t]{$\a_1$}
\psfrag{a7}[t]{$1$}
\psfrag{m}[r]{\footnotesize more capable\hspace{-2pt}}
\psfrag{l}[r]{\footnotesize less noisy}
\psfrag{l1}[l]{\footnotesize less noisy}
\psfrag{d}[r]{\footnotesize degraded}
\includegraphics[scale=0.65]{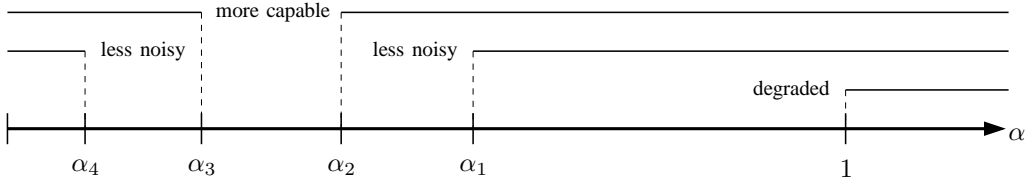}
\end{center}
\caption{Illustration of the ranges of $\a$ for which the channel is degraded, less noisy, and more capable as stated in Theorem~\ref{thm:lnmc}.}
\label{fig:thm2thm3}
\end{figure}

%---------------------------------------- We are now ready to prove Theorem~\ref{thm:lnmc}.
\begin{proof}[Proof of Theorem~\ref{thm:lnmc}]

We first find the condition on $\a$ for one receiver being less noisy or more capable than the other receiver.
\begin{enumerate}
\item[1.] By part 1 of Lemma~\ref{lemma:envelope}, $\CC[I_1(q)-I_2(q)] = I_1(q)-I_2(q)$ if and only if $\a \ge \a_1$. Thus $Y_1$ is less noisy than $Y_2$ iff $\a \ge \a_1$. 
\item[2.] By part 2 of Lemma~\ref{lemma:envelope}, $\CC[I_2(q)-I_1(q)] = 0$ if and only if $\a \ge \a_2$. Thus $Y_1$ is more capable than $Y_2$ iff  $\a \ge \a_2$. 
\item[3.] By part 2 of Lemma~\ref{lemma:envelope}, $\CC[I_2(q)-I_1(q)] = I_2(q)-I_1(q)$ if and only if $0 \le \a \le \a_4$. Thus $Y_2$ is less noisy than $Y_1$ iff $0 \le \a \le \a_4$.
\item[4.] By part 1 of Lemma~\ref{lemma:envelope}, $\CC[I_1(q)-I_2(q)] = 0$ if and only if $0 \le \a \le \a_3$. Thus $Y_2$ is more capable than $Y_1$ iff $0 \le \a \le \a_3$. 
\end{enumerate}
We now obtain the capacity region for P-BC such that $Y_1$ is more capable than $Y_2$ (part 2) and for P-BC such that $Y_1$ is less noisy than $Y_2$ (part 1). By exchanging $Y_1$ and $Y_2$ and $I_1(\cdot)$ and $I_2(\cdot)$, the capacity expression for part 4  and part 3 of Theorem~\ref{thm:lnmc} can be obtained similarly. 

For the $1/\Delta$-extension binary P-BC shown in Figure~\ref{fig:BPBC}-(b), superposition coding inner bound is the set of rate pairs $(R_1,R_2)$ such that
\begin{align}\begin{split}\label{region:more capable-PBC}
R_1 &< (1/\Delta)I(X;Y^\Delta_1|U),\\
R_2 &< (1/\Delta)I(U;Y^\Delta_2),\\
R_1+R_2 &< (1/\Delta)I(X;Y_1^\Delta)
\end{split}
\end{align}
for some pmf $p(u,x)$, and $|\Ucal| \le |\Xcal|+1$.

Let $U \in \{0,1,2\}$ where $p_U(i) = \beta_i$ and $p_{X|U}(1|i) = p_i \in [0,1]$. As $\Delta \to 0$, the region in~\eqref{region:more capable-PBC} is equivalent to the region in~\eqref{thm2-part2} in Theorem 2. If a P-BC is more capable, the region in~\eqref{thm2-part2} is indeed the capacity region. This is because the UV outer bound for $1/\Delta$-extension binary P-BC as $\Delta \to 0$ reduces to the inner bound in~\eqref{thm2-part2}  under the condition for more capable P-BC.

For part 1 of Theorem~\ref{thm:lnmc}, note that the capacity region is the region in~\eqref{thm2-part2} with an inactive sum bound, i.e., the set of rate pairs $(R_1,R_2)$ that satisfy 
\begin{align}\begin{split}\label{region1}
R_1 &\le \sum_{i\in\{0,1,2\}} \beta_i I_1(p_i),\\
R_2 &\le I_2\bigg(\sum_{i\in\{0,1,2\}}\beta_i p_i\bigg) - \sum_{i\in\{0,1,2\}}\beta_i I_2(p_i)
\end{split}\end{align}
for some $\beta_i, p_i \in [0,1]$ such that $\sum_{i\in\{0,1,2\}}\beta_i = 1$.
Let $\Rr'$ and $\Rr''$ denote the region in~\eqref{thm2-part1} and ~\eqref{region1} respectively. %Notation is repeatitive.
Note that $\Rr'$ is the set of rate pairs $(R_1,R_2)$ that satisfy inequalities in~\eqref{region1} for $(\b_0, \b_1, \b_2) = (\beta, \bar{\beta}, 0)$ for some $\beta \in [0,1]$ and $p_0 \in [0,1]$ and $p_1 = 1$. Thus $\Rr' \subseteq \Rr''$. We now show that every supporting hyperplane of $\Rr''$ intersects $\Rr'$, i.e., $\max_{(R_1,R_2) \in \Rr''}(\l R_1+R_2) \le \max_{(r_1,r_2) \in \Rr'} (\l r_1+r_2)$. Consider 
\begin{align}
\max_{(R_1,R_2) \in \Rr''} (\lambda R_1 + R_2) &=\max_{\beta_i, p_i \in [0,1]} \bigg(\sum_{i\in\{0,1,2\}} \beta_i (\l I_1(p_i) - I_2(p_i)) + I_2\big(\sum_{i\in\{0,1,2\}}\beta_i p_i\big)\bigg)\nonumber\\
					 &=\max_{ p \in [0,1]} \bigg(\max_{\b_i,p_i: \sum \b_i p_i = p} \Big( \sum_{i\in\{0,1,2\}} \beta_i (\l I_1(p_i) - I_2(p_i))\Big) + I_2(p)\bigg)\nonumber\\
					 &=\max_{ p \in [0,1]} \bigg( \CC[\l I_1(p) - I_2(p)] + I_2(p)\bigg)\nonumber\\
					 &\stackrel{(a)}{=}\max \bigg(\max_{p\in [0,t']} \big(\l I_1(p)-I_2(p) + I_2(p) \big), \max_{p\in[t',1]} \big((1-p)(\l I_1(t')-I_2(t'))/(1-t') + I_2(p)\big) \bigg)\label{above}\\
					 &\stackrel{(b)}{\le} \max_{\beta,p_0 \in [0,1]} \big(\b(\l I_1(p_0) - I_2(p_0)) + I_2(\b p_0+\bar{\b})\big)\label{below}\\ %+\bar{\b}(\l I_1(1) - I_2(1)) \big)\label{below}\\
					 &=\max_{(r_1,r_2) \in \Rr'} (\lambda r_1 + r_2).\nonumber
%					&=\begin{cases}
%					I_2(p^*) + \l I_1(p^*)-I_2(p^*) &\text{if } \\
%					I_2(p^*) + \frac{1-p^*}{1-t}(\l I_1(t)-I_2(t)) &\text{otherwise.}
%					\end{cases}					 
%									&= \max_{q \in [0,1]} \bigg(I_2(q) + \max_{\gamma,q_1,q_2\colon \gamma q_1+\bar{\gamma}q_2 = q} \gamma (\l I_1(q_1) - I_2(q_1)) +\bar{\gamma} (\l I_1(q_2) - I_2(q_2))\bigg)\\
%									&= \max_{q \in [0,1]} \big(I_2(q) + \CC[\l I_1(q) - I_2(q)]\big)\\
%									&= \max_{q \in [0,1]} \bigg(I_2(q) + \max_{\gamma,q_1 \colon \gamma q_1+\bar{\gamma} = q} \gamma (\l I_1(q_1) - I_2(q_1)) +\bar{\gamma} (\l I_1(1) - I_2(1))\bigg)		
\end{align}
Step $(a)$ holds by Lemma~\ref{lemma:envelope}. To prove step $(b)$, note that the two terms in~\eqref{above} are obtained by letting $(\b, p_0)=(1, p)$ and $(\b, p_0)=((1-p)/(1-t'), t')$ in~\eqref{below}.

We now show that $\Rr'$ is convex. Suppose $\Rr'$ is not convex, i.e., $\co\{\Rr'\} \neq \Rr'$ where $\co\{\Rr'\}$ denotes the convex hull of $\Rr'$.  There exists a rate pair $(r_1,r_2) \in \co\{\Rr'\}$ on the boundary of $\co\{\Rr'\}$, i.e., $\mu r_1+r_2 = \max_{(R_1,R_2) \in \co\{\Rr'\}}(\mu R_1+R_2)$ for some $\mu \ge 0$ such that $(r_1,r_2) \not\in \Rr'$.
Note that $(r_1,r_2) = \eta(r_{10},r_{20})+(1-\eta)(r_{11},r_{21})$ for some $0<\eta<1$ and $(r_{10},r_{20}), (r_{11},r_{21}) \in \Rr'$. Since $\max_{(R_1,R_2) \in \co\{\Rr'\}}(\mu R_1+R_2) = \max_{(R_1,R_2) \in \Rr'}(\mu R_1+R_2)$, the two rate pairs $(r_{10},r_{20})$ and $(r_{11},r_{21})$ satisfy
\begin{align}
\mu r_{10}+r_{20}= \mu r_{11}+r_{21} = \max_{(R_1,R_2) \in \Rr'} (\mu R_1+R_2).\label{two-pairs}
\end{align}
We now show that the equality~\eqref{two-pairs} cannot hold for $(r_{10},r_{20}) \neq (r_{11},r_{21})$, i.e., there exists a unique rate pair $(R_1,R_2)$ such that $\mu R_1+R_2 = \max_{(r_1,r_2) \in \Rr'} (\mu r_1+r_2)$. 
Consider 
\begin{align*}
\max_{(r_1,r_2) \in \Rr'} \big(\mu r_1+r_2\big)=\max_{\b, p_0 \in [0,1]} \big(I_2(\b p_0+\bar{\b}) + \mu \b I_1(p_0) - \b I_2(p_0)\big).
\end{align*}
We show that $(\b, p_0)$ that achieves the maximum is unique. Note that 
\[
\max_{\b, p_0} \big(I_2(\b p_0+\bar{\b}) + \mu \b I_1(p_0) - \b I_2(p_0)\big) = \max_p \big(I_2(p) + \CC[\mu I_1(p) - I_2(p)]\big).
\]
Since $I_2(p)$ is strictly concave, $I_2(p) + \CC[\mu I_1(p) - I_2(p)]$ is strictly concave. Let $p^*$ the unique solution that maximizes $I_2(p) + \CC[\mu I_1(p) - I_2(p)]$. Then 
\begin{align*}
\max_{\b, p_0} \big(I_2(\b p_0+\bar{\b}) + \mu \b I_1(p_0) - \b I_2(p_0)\big) &= I_2(p^*) + \CC[\mu I_1(p^*) - I_2(p^*)]\\
&= \max_{\b, p_0: \b p_0+\bar{\b} = p^*} \big(I_2(p^*) + \mu \b I_1(p_0) - \b I_2(p_0)\big) 
\end{align*}
Finally by Lemma~\ref{lemma:envelope}, there exists a unique $(\beta,p_0)$ such that $\beta p_0+\bar{\beta} = p^*$ and 
\[
 \CC[\mu I_1(p^*) - I_2(p^*)] =  \mu \b I_1(p_0) - \b I_2(p_0).
\]
Thus $\Rr'=\co\{\Rr'\}$. To complete the proof for $\Rr' = \Rr''$, we use Lemma~\ref{subseteq} below.
%Using the following lemma it follows that $\Rr_2 = \co\{\Rr_1\}$ where $\co\{\Rr_1\}$ denotes the convex hull of $\Rr_1$.
%% {\color{blue} Evaluation: $p(u,x)$ can be restricted to Z channel}
%% Thus if $\a \ge \a_1$, capacity region is achieved using superposition coding, that is, 
%\begin{align*}
%\max_{(R_1,R_2)\in \Cr}\lambda R_1 + R_2 &= 						
%\end{align*}
%%superposition rate region, that is, the capacity region is the set of rate pairs $(R_1,R_2)$ such that
%% \begin{align*}
%%R_1 &\le \gamma I_1(q_1) +\bar{\gamma} I_1(q_2),\\
%%R_2 &\le I_2(\gamma q_1+\bar{\gamma}q_2) - \gamma I_2(q_1) - \bar{\gamma} I_2(q_2)
%%\end{align*}
%%for some $0 \le \gamma, q_1,q_2 \le 1$. %%% About the cardinality
%We now show that it suffices to consider $0 \le \gamma, q_1 \le 1$ and $q_2 = 1$.
%Consider 
%\begin{align*}
%\max_{(R_1,R_2)\in\Rr_{sp}}\lambda R_1 + R_2 &= \max_{\gamma, q_1, q_2 \in [0,1]} \gamma (\l I_1(q_1) - I_2(q_1)) +\bar{\gamma} (\l l_1(q_2) - I_2(q_2)) + I_2(\gamma q_1+\bar{\gamma}q_2)\\
%									&= \max_{q \in [0,1]} \bigg(I_2(q) + \max_{\gamma,q_1,q_2\colon \gamma q_1+\bar{\gamma}q_2 = q} \gamma (\l I_1(q_1) - I_2(q_1)) +\bar{\gamma} (\l I_1(q_2) - I_2(q_2))\bigg)\\
%									&= \max_{q \in [0,1]} \big(I_2(q) + \CC[\l I_1(q) - I_2(q)]\big)\\
%									&= \max_{q \in [0,1]} \bigg(I_2(q) + \max_{\gamma,q_1 \colon \gamma q_1+\bar{\gamma} = q} \gamma (\l I_1(q_1) - I_2(q_1)) +\bar{\gamma} (\l I_1(1) - I_2(1))\bigg)					%\end{align*}
\end{proof}

\smallskip
\begin{lemma}\label{subseteq}
\textnormal{~\cite{Eggleston1958} Let $\Rr \in \mathbb{R}^d$ be convex and $\Rr' \subseteq \Rr''$ be two bounded convex subsets of $\Rr$, closed relative to $\Rr$. If every supporting hyperplane of $\Rr''$ intersects $\Rr'$, then $\Rr' = \Rr''$. 
}\end{lemma}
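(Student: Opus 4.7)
The plan is to argue by contradiction using the separating hyperplane theorem. Suppose $\Rr' \subsetneq \Rr''$ and pick a point $x_0 \in \Rr'' \setminus \Rr'$. Since $\Rr'$ is a bounded, closed, convex set and $x_0 \notin \Rr'$, the strict form of the separating hyperplane theorem would provide a nonzero vector $\mu \in \mathbb{R}^d$ with
\[
\mu^\top x_0 \;>\; \sup_{y \in \Rr'} \mu^\top y.
\]

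Next I would exploit that $\Rr''$ is itself bounded and closed, hence compact, so that $c^{*} = \max_{z \in \Rr''} \mu^\top z$ is attained at some $z^{*} \in \Rr''$. The hyperplane $H = \{z \in \mathbb{R}^d : \mu^\top z = c^{*}\}$ then supports $\Rr''$ at $z^{*}$, with $\Rr''$ contained in the half-space $\{z : \mu^\top z \le c^{*}\}$. From $x_0 \in \Rr''$ we get $c^{*} \ge \mu^\top x_0 > \sup_{y \in \Rr'} \mu^\top y$, so $\mu^\top y < c^{*}$ for every $y \in \Rr'$, i.e., $H \cap \Rr' = \emptyset$. This contradicts the hypothesis that every supporting hyperplane of $\Rr''$ meets $\Rr'$, and therefore $\Rr' = \Rr''$.

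The main technical subtlety is handling ``closed relative to $\Rr$'' rather than closed in $\mathbb{R}^d$: the strict separating hyperplane theorem requires closedness in the ambient space, and the maximum defining $c^{*}$ requires compactness in the ambient space. The clean way around this is to work inside the closed affine hull of $\Rr''$, within which the two sets $\Rr'$ and $\Rr''$ are already closed and bounded (hence compact), and their supporting hyperplanes in this affine hull correspond one-to-one with the supporting hyperplanes of the original sets relative to $\Rr$. With that reduction, the separation step and the attainment of the maximum are both standard, and the three-term inequality above delivers the contradiction.
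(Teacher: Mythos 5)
The paper offers no proof of this lemma at all; it is quoted verbatim from Eggleston's book on convexity, so there is nothing to compare against on the paper's side. Your separating-hyperplane argument is the standard proof of this fact and its core is sound: strictly separate a putative point $x_0 \in \Rr''\setminus\Rr'$ from $\Rr'$, push the separating hyperplane out to the level $c^{*}=\sup_{z\in\Rr''}\mu^\top z$, and observe that the resulting supporting hyperplane of $\Rr''$ misses $\Rr'$, contradicting the hypothesis.

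The one place you go astray is the patch for ``closed relative to $\Rr$.'' Passing to the closed affine hull does not make the sets closed: e.g.\ for $\Rr=(0,1)$ and $\Rr'=(0,\tfrac12]$ the affine hull is all of $\mathbb{R}$ and $\Rr'$ is still not closed, so neither strict separation nor attainment of $c^{*}$ is recovered that way. The hypothesis you should actually use is relative closedness itself: it gives $\overline{\Rr'}\cap\Rr=\Rr'$, and since $x_0\in\Rr''\subseteq\Rr$, any $x_0\in\Rr''\setminus\Rr'$ automatically lies outside the \emph{compact} set $\overline{\Rr'}$, so strict separation applies to $\overline{\Rr'}$ and $\sup_{y\in\Rr'}\mu^\top y=\max_{y\in\overline{\Rr'}}\mu^\top y<\mu^\top x_0$. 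For the second step, take the supremum $c^{*}$ over $\overline{\Rr''}$ (where it is attained); the hyperplane $\{z:\mu^\top z=c^{*}\}$ is then a supporting hyperplane of $\Rr''$ in the usual sense of touching its closure, and the contradiction $\mu^\top y<\mu^\top x_0\le c^{*}$ for all $y\in\Rr'$ goes through unchanged. (In the paper's application the regions are closed convex subsets of $\mathbb{R}^2_{\ge 0}$, so these distinctions are harmless there, but if you are going to flag the subtlety you should resolve it correctly.)
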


As mentioned in the first part of this section, superposition coding is also optimal for the essentially less noisy and essentially more capable classes. Can we extend the range of parameter $\a$ for which superposition coding is optimal by evaluating the conditions for these two classes? 

To answer this question, first note that or binary input broadcast channels, the essentially more capable condition in~\cite{Nair2010} reduces to that for the more capable class; hence essentially more capable does not extend the range of $\a$ for which superposition coding is optimal beyond more capable. 
To see this, consider a binary input broadcast channel which is not more capable. Then there exists $p,q\in(0,1)$ such that $I(X;Y_2)_{p}-I(X;Y_1)_{p} > 0$ and $I(X;Y_1)_{q} - I(X;Y_2)_{q} > 0$. Then
\begin{align*}
\mathfrak{C}[I(X;Y_2)_r-I(X;Y_1)_r] > 0 &\text{ for every } r \in (0,1),\\
\mathfrak{C}[I(X;Y_1)_r-I(X;Y_2)_r] > 0 &\text{ for every } r \in (0,1).
\end{align*}
Thus, for every $X \sim \Bern(r)$ for $r \in (0,1)$, there exists $U_1$ and $U_2$ such that $I(X;Y_2|U_1)-I(X;Y_1|U_1)>0$ and $I(X;Y_1|U_2)-I(X;Y_2|U_2)>0$. Hence if a binary input broadcast channel is not more capable it is also not essentially more capable. %%

The answer for essentially less noisy is less clear. It appears to be quite difficult to evaluate the set of pmfs $\Pc_o$ that satisfy the condition in~\eqref{setP}. In the following section, we define a new class of broadcast channels for which the condition can be easily evaluated and which includes the essentially less noisy class. 

%We show that such there exists such $\a$ by defining a new class of channels called effectively less noisy channels which includes essentially less noisy channels.  
%\begin{remark}\textnormal{
%essentially less noisy...
%}\end{remark}

\section{Effectively less noisy P-BC}\label{sec:eff}
Consider outer bound $\bar\Rr_1$ on the capacity region of the DM-BC $p(y_1,y_2|x)$ which consists of all rate pairs $(R_1,R_2)$ such that
 \begin{align}\begin{split}\label{outer}
R_2 &\leq I(U;Y_2),\\
R_1+R_2 &\leq I(U;Y_2)+I(X;Y_1|U)
\end{split}\end{align}
for some $p(u,x)$. To see that this is indeed an outer bound, note that it is simply the K{\"o}rner--Marton~\cite{Marton1979} outer bound without the sum rate bound. 

%Note that $\bar{\Rr}$ is in general larger than the UV outer bound; hence it is an outer bound on the capacity region of any broadcast channel.
The outer bound $\bar \Rr_1$ can be alternatively represented in terms of its supporting hyperplanes as: 
\begin{align}\begin{split}\label{hyper-outer}
\max_{(r_1,r_2) \in \bar \Rr_1}(\lambda r_1+ r_2) =\begin{cases}
\max_{p(u,x)}(\lambda I(X;Y_1|U)+I(U;Y_2))&\text{ if }0 \le \lambda \leq 1,\\
 \max_{p(u,x)}\lambda(I(U;Y_2)+I(X;Y_1|U)) &\text{ if }\lambda > 1.
\end{cases}
\end{split}\end{align}
Now consider the supporting hyperplane representation of the superposition rate region $\Rr_1$ in~\eqref{region:more capable}:
\begin{align*}
\max_{(R_1,R_2) \in \Rr_1} (\lambda R_1 + R_ 2) = 
\begin{cases} \max_{p(u,x)} \big(\lambda \min\{ I(X;Y_1|U), I(X;Y_1) -I(U;Y_2)\} + I(U;Y_2)\big) &\text{ if }0 \le \lambda \leq 1,\\
\max_{p(u,x)}\lambda \big(I(U;Y_1)+I(X;Y_1|U)\big) &\text{ if }\lambda > 1.%I(X;Y_1) &\text{ if }\lambda > 1.
\end{cases}
\end{align*}
Note that $\bar \Rr_1$ and $\Rr_1$ differ only in the first term when $0 \le \lambda \le 1$: for $\bar \Rr_1$, the term is $\l I(X;Y_1|U)$, while for $\Rr_1$ the term is $\l \min\{I(X;Y_1|U),I(X;Y_1)-I(U;Y_2)\}$. When $\lambda > 1$, for $\bar \Rr_1$, the term is $\l I(U;Y_2)$, while for $\Rr_1$ the term is $\l I(U;Y_1)$.%................

Now it is easy to see that if the DM-BC is less noisy, i.e., $I(U;Y_1) \ge I(U;Y_2)$ for all $p(u,x)$, then these two bounds coincide. The key observation that leads to a more general class than less noisy is that the inequality $I(U;Y_1) \ge I(U;Y_2)$ does not need to hold for every $p(u,x)$. 
 %\begin{align}
%\max_{(R_1,R_2) \in \Rr} (\lambda R_1 + R_ 2) = \max_{(r_1,r_2) \in \bar{\Rr}} (\lambda r_1+ r_2).%\label{support:equal}
%\end{align}
For example, if $I(U;Y_1) \ge I(U;Y_2)$ for every $p(x) \in \Pc$ and every $p(u|x)$ such that
\begin{align}\begin{split}\label{setPP}
\max_{p(x)}\max_{p(u|x)}(\lambda I(X;Y_1|U)+I(U;Y_2)) = \max_{p(x) \in \Pc}\max_{p(u|x)} (\lambda I(X;Y_1|U)+I(U;Y_2))
\end{split}\end{align}
for every $0 \le \l \le 1$, then $\Rr_1$ and $\bar \Rr_1$ coincide. %% EDIT: added $0 \le \l \le 1$

This particular example is quite interesting because both the condition for the set $\Pc$ and the inequality $I(U;Y_1) \ge I(U;Y_2)$ can be expressed in terms of the upper concave envelope of $I(X;Y_1) - I(X; Y_2)$, which makes their evaluation quite straightforward especially for binary broadcast channels.

%We can write the definition using upper concave envelope as follows. %%%Need more reasoning for why we define eff. less noisy this way. Emphasis on concave envelope and we can actually find the ... 
We are now ready to introduce a new class of broadcast channels for which superposition coding is optimal.
\begin{definition}[Effectively less noisy broadcast channels]\label{def:effectively}
\textnormal{
For a DM-BC, $p(y_1,y_2|x)$, let $\Pc$ be the set of pmfs $p(x)$ such that for every $0 \le \lambda \le 1$,}
\begin{align}\label{setP}
\begin{split}
\max_{p(x)} \big( I(X;Y_2) + \CC[\l I(X;Y_1) - I(X;Y_2)]\big) = \max_{p(x) \in \Pc}\big( I(X;Y_2) + \CC[\l I(X;Y_1) - I(X;Y_2)]\big). 
\end{split}
\end{align}
\textnormal{Receiver $Y_1$ is said to be {\em effectively less noisy} than receiver $Y_2$ if $I(X;Y_1) - I(X;Y_2)=\CC[I(X;Y_1) - I(X;Y_2)]$ for every $p(x)\in \Pr$. %$I(U;Y_2) \geq I(U;Y_1)$ for every $p(x) \in \Pc$ and $p(u|x)$
}\end{definition}

%
%The capacity region of a broadcast channel in which $Y_1$ is effectively less noisy than $Y_2$ is the set of rate pairs $(R_1,R_2)$ that satisfy the first two inequalities in~\eqref{region:more capable}.

Clearly if the DM-BC is less noisy, then it is effectively less noisy. We can further show that if the channel is essentially less noisy as defined in~\cite{Nair2010}, it is also effectively less noisy. To show this note that  the sufficient class $\Pc_o$ in~\eqref{setPo} must satisfy
\begin{align*}
\max_{p(x)}\max_{p(u|x)}(\lambda I(X;Y_1|U)+I(U;Y_2)) \le \max_{p(x) \in \Pc_o}\max_{p(u|x)} (\lambda I(X;Y_1|U)+I(U;Y_2))
%\max_{p(x)}\max_{p(u|x)}(I(U;Y_1)+\lambda I(X;Y_2|U))\leq \max_{p(x) \in \Pc_o}\max_{p(u|x)} (I(U;Y_1)+\lambda I(X;Y_2|U)).
\end{align*}
for every $0 \le \l \le 1$. Hence $\Pc_o \supseteq \Pc$. If $I(U;Y_1) \leq I(U;Y_2)$ for $p(x)\in \Pc_o$ and every $p(u|x)$, then $I(U;Y_1) \leq I(U;Y_2)$ for $p(x)\in \Pc$ and every $p(u|x)$. We do not know if the condition for effectively less noisy is strictly weaker than that for essentially less noisy, however. As we will see in the next section, effectively less noisy neither  implies nor is implied by more capable in general. 

%\smallskip
%\begin{lemma}\label{subseteq}
%\textnormal{~\cite{Eggleston1958} Let $\Rr \in \mathbb{R}^d$ be convex and $\Rr_1 \subseteq \Rr_2$ be two bounded convex subsets of $\Rr$, closed relative to $\Rr$. If every supporting hyperplane of $\Rr_2$ intersects $\Rr_1$, then $\Rr_1 = \Rr_2$. 
%}\end{lemma}

%Eff. less noisy P-BC
The definition of effectively less noisy can be readily extended to the P-BC in the same manner as the less noisy and more capable we presented in the previous section. 
\begin{definition}[Effectively less noisy P-BC]\label{def:effectivelyP-BC}
\textnormal{For the 2-receiver P-BC, let $\Qcal \subseteq [0,1]$ be such that for every $0\le \l \le 1$, 
\begin{align}
\max_{q\in[0,1]} I_2(q) + \CC[\l I_1(q) - I_2(q)] = \max_{q\in\Qcal} I_2(q) + \CC[\l I_1(q) - I_2(q)].\label{eff-pbc}
\end{align}
Receiver $Y_1$ is said to be effectively less noisy than $Y_2$ if $I_1(q) - I_2(q) = \CC[I_1(q) - I_2(q)]$ for every $q \in \Qcal$.
}\end{definition}

To establish the parameter ranges for effectively less noisy P-BC, we need the following additional breakpoints of $\a \colon$
\begin{align*}
\a_{12} &=g_1(q_2),\\ %= \frac{(1+s_2) \log(1+s_2) - (1+s_2) \log(q_2+s_2) - 1+q_2}{(1+s_1) \log(1+s_1) - (1+s_1) \log(q_2+s_1) - 1 + q_2},\\
\a_{23} &=g_2(q_1) %= \frac{s_2\log(1+q_1/s_1)-q_1}{s_1\log(1+q_1/s_2)-q_1}.
\end{align*}
where the functions $g_1(\cdot)$ and $g_2(\cdot)$ are defined in~\eqref{g1} and~\eqref{g2}, respectively.
  
\begin{theorem}\label{thm:eff}
\textnormal{Consider a 2-receiver P-BC and assume that $s_1 \le s_2$.
\begin{enumerate}
\item[1.] If $\a \ge \a_{12}$, $Y_1$ is effectively less noisy than $Y_2$, and the capacity region is the set of rate pairs that satisfy~\eqref{thm2-part1}.
\item[2.] If $\a \le \a_{23}$, $Y_2$ is effectively less noisy than $Y_1$, and the capacity region is the set of rate pairs that satisfy~\eqref{thm2-part3}.
\end{enumerate}
}\end{theorem}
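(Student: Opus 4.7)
The two parts of the theorem are symmetric (exchange $Y_1 \leftrightarrow Y_2$ and use part~2 of Lemma~\ref{lemma:envelope} in place of part~1), so I focus on part~1. My plan is to (a) verify the effectively-less-noisy condition of Definition~\ref{def:effectivelyP-BC} by exhibiting a suitable set $\Qcal$, and then (b) derive the capacity expression~\eqref{thm2-part1} from the inner--outer bound match discussed in Section~\ref{sec:eff}.

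For (a), I would take $\Qcal = [0, t]$ with $t = g_1^{-1}(\alpha)$. Since $g_1$ maps $[0,1]$ monotonically onto $[\alpha_3, \alpha_1]$ and $\alpha \ge \alpha_{12} = g_1(q_2)$, we have $t \ge q_2$, and Lemma~\ref{lemma:envelope} part~1 already yields $I_1(q) - I_2(q) = \CC[I_1 - I_2](q)$ on $\Qcal$. What remains is to show that for every $\l \in [0,1]$ some maximizer of $g_\l(q) := I_2(q) + \CC[\l I_1(q) - I_2(q)]$ lies in $\Qcal$, thereby establishing~\eqref{eff-pbc}. I would first extend the tangency analysis underlying Lemma~\ref{lemma:envelope} from $\l = 1$ to general $\l$: because $\l I_1(0) - I_2(0) = \l I_1(1) - I_2(1) = 0$, the chord equation $(\l I_1 - I_2)(c) + (1-c)(\l I_1 - I_2)'(c) = 0$ reduces (after collecting terms exactly as in the proof of Lemma~\ref{lemma:envelope}) to $g_1(c) = \l\alpha$. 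Consequently the tangent point is $c(\l) = g_1^{-1}(\l\alpha) \le t$ whenever $\l\alpha \in (\alpha_3, \alpha_1)$, and
\[
\CC[\l I_1 - I_2](q) =
\begin{cases}
\l I_1(q) - I_2(q), & q \in [0, c(\l)],\\
A(\l)(1-q)/(1-c(\l)), & q \in (c(\l), 1],
\end{cases}
\]
where $A(\l) := \l I_1(c(\l)) - I_2(c(\l))$; when $\l\alpha \le \alpha_3$, the envelope collapses to $0$.

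Because $\CC[\l I_1 - I_2]$ is concave on $[0,1]$ with non-negative endpoint values, it is non-negative throughout $[0,1]$; in particular $A(\l) \ge 0$. For $q \in (c(\l), 1)$,
\[
g_\l'(q) = I_2'(q) - A(\l)/(1 - c(\l)),
\]
and at $q = t \ge q_2$ both $I_2'(t) \le 0$ and $-A(\l)/(1-c(\l)) \le 0$, so $g_\l'(t^+) \le 0$. The concavity of $g_\l = I_2 + \CC[\l I_1 - I_2]$ then forces $g_\l$ to be non-increasing on $[t, 1]$, so some maximizer lies in $\Qcal$. The boundary cases ($\l\alpha \le \alpha_3$ and $\l = 1$) are handled directly: in the former $g_\l = I_2$ is maximized at $q_2 \le t$, and in the latter $c(1) = t$, so the tangency relation yields $g_1'(t) = I_1'(t) \le 0$ since $t \ge q_2 \ge q_1$.

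For (b), the supporting-hyperplane identity of Section~\ref{sec:eff} shows that $\Rr_1 = \bar\Rr_1$ whenever the channel is effectively less noisy. By the remark at the end of Section~\ref{sec:superposition}, for a binary-input broadcast channel the optimal superposition uses $U \sim \Bern(\beta)$ and a Z-channel $p(x|u)$ as in Figure~\ref{z-ch1}(a); passing to the $\Delta \to 0$ limit in the $1/\Delta$-extension of the binary P-BC yields region~\eqref{thm2-part1}. A final invocation of Lemma~\ref{subseteq}, exactly as at the end of the proof of Theorem~\ref{thm:lnmc} part~1, upgrades this to the full capacity region. The main obstacle I anticipate is the tangency extension in step~(a)---identifying $c(\l) = g_1^{-1}(\l\alpha)$ and confirming the piecewise envelope structure for arbitrary $\l \in [0,1]$; once that is in hand, the non-negativity $A(\l) \ge 0$ and the derivative bound $g_\l'(t^+) \le 0$ are short consequences of concavity.
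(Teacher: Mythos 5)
Your proof is correct and follows essentially the same route as the paper's: the key step in both is to rewrite $\l I_1$ as $I_1$ with channel gain $\l\a$ so that Lemma~\ref{lemma:envelope} gives the envelope of $\l I_1 - I_2$ with breakpoint $g_1^{-1}(\l\a)$, and then to use concavity of $I_2 + \CC[\l I_1 - I_2]$ together with a sign check on its derivative at a point $\ge q_2$ to localize the maximizer. The only (cosmetic) difference is that you take $\Qcal = [0,t]$ and check the derivative at $t$, whereas the paper takes $\Qcal = [0,q_2]$ and checks the derivative at $q_2$; both hinge on $\a \ge \a_{12} \Leftrightarrow t \ge q_2$ and on $A(\l)\ge 0$, which the paper justifies via~\eqref{argmax1}.
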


Note that as for the less noisy case, the capacity region for effectively less noisy channels is also attained using binary $U$ and a Z-channel from $U$ to $X$.
%% Illustration: Z channel
%For $Y_1$ effectively less noisy than $Y_2$, the capacity region is the set of rate pairs that satisfy inequalities in~\eqref{thm2-part1}, i.e., the capacity region for $Y_1$ less noisy than $Y_2$ is also the capacity region for $Y_1$ effectively less noisy than $Y_2$. The superposition coding scheme for binary $U$ and $p_{X|U}(x|u)$ being a Z channel achieves the capacity region for $Y_1$ effectively less noisy than $Y_2$. Similarly, for $Y_2$ effectively less noisy than $Y_1$, the capacity region is the set of rate pairs that satisfy inequalities in~\eqref{thm2-part3}.
% Illustration: alphas
Figure~\ref{fig:line} illustrates the parameter ranges for which superposition coding is optimal. %Note that in the range $\a_{23} < \a < \a_2$, the channel does not belong to any of the classes for which superposition coding is optimal.
\begin{figure}[htpb]
\begin{center}
\psfrag{0}[t]{}%$0$}
\psfrag{1}[b]{}%$1$}
\psfrag{2}[b]{}%$2$}
\psfrag{3}[b]{}%$3$}
\psfrag{4}[b]{}
\psfrag{5}[b]{}%$4$}
\psfrag{6}[b]{}%$5$}
\psfrag{7}[b]{}%$6$}
\psfrag{8}[b]{}%$7$}
\psfrag{a}[t]{$\a$}
\psfrag{a1}[t]{$\a_4$}
\psfrag{a2}[t]{$\a_3\ $}
\psfrag{a3}[t]{$\ \a_{23}$}
\psfrag{a4}[t]{$\a_2$}
\psfrag{a5}[t]{$\a_{12}$}
\psfrag{a6}[t]{$\a_1$}
\psfrag{a7}[t]{$1$}
\psfrag{e}[r]{\footnotesize eff. less noisy}
\psfrag{m}[r]{\footnotesize more capable\hspace{-2pt}}
\psfrag{l}[r]{\footnotesize less noisy}
\psfrag{l1}[l]{\footnotesize less noisy}
\psfrag{d}[r]{\footnotesize degraded}
\includegraphics[scale=0.65]{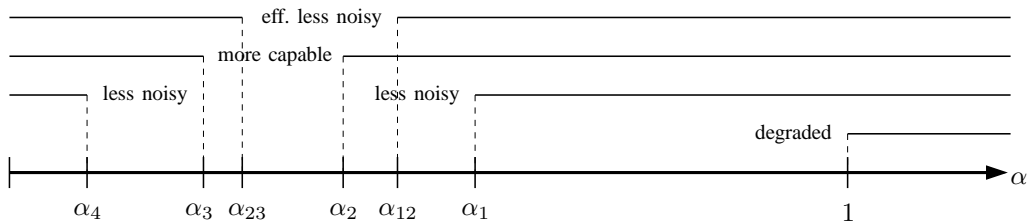}
\end{center}
\caption{Illustration of the ranges of $\a$ for which the channel is degraded, less noisy, more capable, and effectively less noisy as stated in Theorems~\ref{thm:lnmc} and~\ref{thm:eff}.}
\label{fig:line}
\end{figure}

It can be shown that $\a_4 \leq \a_3 \leq \a_{23} \leq \a_{2} \leq \a_{12} \leq \a_1$ (see Appendix~\ref{order}). The fact that $\a_2 \le \a_{12}$ implies that if $Y_1$ is effectively less noisy than $Y_2$, it is also more capable than $Y_2$ (see Figure~\ref{illustration_lemma1}-(b) for an example). Thus effectively less noisy does not offer any new range of parameters for which superposition coding is optimal. On the other hand, the fact that $\a_3 \le \a_{23}$ implies that for $\a \in (\a_3,\a_{23}]$, $Y_2$ is effectively less noisy but is not more capable than $Y_1$ (see Figure~\ref{illustration_lemma2}-(b) for an example). Thus effectively less noisy offers a new range of parameters for which superposition coding is optimal.
\smallskip

\begin{proof}[Proof of Theorem~\ref{thm:eff}] %%Proof in two parts. (1) class. (2) verify condition. 
\begin{enumerate}
\item[1.]
We first show that $\Qc = [0,q_2]$ satisfies~\eqref{eff-pbc}. Specifically we show that 
%, i.e.,
%\begin{align*}
%&\max_{q \in [0,1]} \{I_2(q)+ \mathfrak{C}[ \lambda I_1(q)- I_2(q)]\}=\max_{q \in [0,q_2]} \{I_2(q)+ \mathfrak{C}[\lambda I_1(q)- I_2(q)]\},
%\end{align*}
%we show that 
$I_2(q)+ \mathfrak{C}[ \lambda I_1(q)- I_2(q)]$ is decreasing in $[q_2,1]$. Since $I_2(q)+ \mathfrak{C}[ \lambda I_1(q)- I_2(q)]$ is concave, it suffices to show that the derivative of $I_2(q)+ \mathfrak{C}[ \lambda I_1(q)- I_2(q)]$ at $q=q_2$ is nonpositive. Consider

% As shown in~\ref{I1}, $\lambda I_1(q)$ is equal to $I_1(q)$ for $\alpha' = \l \a$. 
\begin{align*}
\frac{d \big(I_2(q)+ \CC[\l I_1(q) - I_2(q)]\big)}{dq} \bigg |_{q=q_2} &\stackrel{(a)}{=}\begin{cases}
\l I_1'(q_2) &\text{ if } q_2 \ge t'\\
I_2'(q_2)-(\l I_1(t')-I_2(t'))/(1-t') &\text{ otherwise.}
\end{cases} \\
&\stackrel{(b)}{\le} 0
\end{align*}
for $t'=t(\a')$ where $\a ' = \l \alpha$. Step $(a)$ follows by rewritting $\l I_1(q)$ for $A_1 = \a$ as $I_1(q)$ for $A_1 = \l \a$ (see~\eqref{I_i}) and then applying Lemma~\ref{lemma:envelope}. Step $(b)$ holds because $\l I_1'(q_2) \le 0$ and $-(\l I_1(t')-I_2(t'))/(1-t')\le 0$. The first inequality holds since $s_1 \le s_2$ implies $q_1 \le q_2$ and $\l I_1'(q_1) = 0$. The second inequality holds by~\eqref{argmax1}.

By Lemma~\ref{lemma:envelope}, $\CC[I_1(q)-I_2(q)]=I_1(q)-I_2(q)$ for $q \in [0,q_2]$ if and only if $\a \ge g_1(q_2)=a_{12}$ (also see Figure~\ref{fig:ap0} (a)). 
\item[2.]
We first show that $\Qc = [q_1,1]$ satisfies ~\eqref{eff-pbc} with $I_1(\cdot)$ and $I_2(\cdot)$ interchanged. Specifically we show that 
%, i.e.,
%\begin{align*}
%&\max_{q \in [0,1]} \{I_2(q)+ \mathfrak{C}[ \lambda I_1(q)- I_2(q)]\}=\max_{q \in [0,q_2]} \{I_2(q)+ \mathfrak{C}[\lambda I_1(q)- I_2(q)]\},
%\end{align*}
%we show that 
$I_1(q)+ \mathfrak{C}[ \lambda I_2(q)- I_1(q)]$ is increasing in $[0,q_1]$. Since $I_1(q)+ \mathfrak{C}[ \lambda I_2(q)- I_1(q)]$ is concave, it suffices to show that the derivative of $I_1(q)+ \mathfrak{C}[ \lambda I_2(q)- I_1(q)]$ at $q=q_1$ is nonnegative. Consider

% As shown in~\ref{I1}, $\lambda I_1(q)$ is equal to $I_1(q)$ for $\alpha' = \l \a$. 
\begin{align*}
\frac{d \big(I_1(q)+ \CC[\l I_2(q) - I_1(q)]\big)}{dq} \bigg |_{q=q_1}
 &=\frac{d \big(I_1(q)+ \l \CC[I_2(q) - I_1(q)/\l]\big)}{dq} \bigg |_{q=q_1}\\
 &\stackrel{(a)}{=}\begin{cases}
 \l I_2'(q_1) &\text{ if } q_1 \ge r'\\
I_1'(q_1) + (\l I_2(r') - I_1(r'))/r' &\text{ otherwise.}
\end{cases} \\
&\stackrel{(b)}{\ge} 0
\end{align*}
for $r'=r(\a')$ where $\a' = \a/\l$. Step $(a)$ follows by rewritting $I_1(q)/\l$ for $A_1 = \a$ as $I_1(q)$ for $A_1 = \a/\l$ (see~\eqref{I_i}) and then applying Lemma~\ref{lemma:envelope}. Step $(b)$ holds because $\l I_2'(q_1) \ge 0$ and $(\l I_2(r')-I_1(r'))/r' \ge 0$. The first inequality holds since $s_1 \le s_2$ implies $q_1 \le q_2$ and $\l I_2'(q_2) = 0$. The second inequality holds by~\eqref{argmax1}.

By Lemma~\ref{lemma:envelope}, $\CC[I_2(q)-I_1(q)]=I_2(q)-I_1(q)$ for $q \in [q_1,1]$ if and only if $\a \le g_2(q_1)=a_{23}$ (also see Figure~\ref{fig:ap0} (b)). 
%We first show that
%\begin{align*}
%&\max_{q\in[0,1]} \{I_1(q)+ \mathfrak{C}[ \lambda I_2(q)- I_1(q)]\} = \max_{q\in[q_1,1]} \{I_1(q)+ \mathfrak{C}[\lambda I_2(q)- I_1(q)]\}.
%\end{align*}
%
%For $\alpha'=\alpha/\lambda$ and $r'=r(\alpha')$, by Lemma~\ref{lemma:envelope}, the derivative of $\mathfrak{C}[I_2(q)-I_1(q)/\lambda]$ is $ I_2'(q)-I_1'(q)/\lambda \geq 0$ if $q_1 \geq r'$ and $(I_2(r')-I_1(r')){r'}^{-1}\geq 0$, otherwise. 
%Since $\mathfrak{C}[\lambda I_2(q)-I_1(q)]$ and $I_1(q)$ are concave and increasing at $q_1$, they are increasing in $[0,q_1]$. 
%
%We now show that the condition~\eqref{ineq2} holds with $\Pc = \{p(x)\colon p_X(1) \ge q_1\}$. Note that $0 < q_1 < 1$. By Lemma~\ref{lemma:envelope}, 
%\begin{align*}
%\mathfrak{C}[I_2(q)-I_1(q)]=I_2(q)-I_1(q) \text{ for }q \geq q_1
%\end{align*}
%if $0 \le \a \le g_1(q_1) = \a_{23}$. As an example, see Figure~\ref{fig:ap0}. %Refer to the figure. To help the understanding?
%
%The condition $0 \le \a \le \a_{23}$ is also necessary for $Y_2$ to be effectively less noisy than $Y_1$. This is because $\Pc$ must include $\Bern(q_1)$ and by Lemma~\ref{lemma:envelope}, %-1?????, Rewritten.
% if $\mathfrak{C}[I_2(q_1)-I_1(q_1)]=I_2(q_1)-I_1(q_1)$, then $\a$ has to be in $[0, \a_{23}]$. 
%
%Hence for $0 \leq \a \leq \a_{23}$, the capacity region is given by~\eqref{region:effectively}. To complete the proof, we evaluate the mutual information terms in~\eqref{region:effectively} for $p_U(0)=\beta, p_U(1)=1-\beta, p_{X|U}(1|0)=p_0, p_{X|U}(1|1)=p_1$.
%
\end{enumerate}
\end{proof}

Figure~\ref{fig:sp_optimal} illustrates the parameter ranges for which superposition coding is optimal. As can be seen the area in the $\a$-$s_2$ plane where superposition coding is not optimal becomes smaller as $s_1$ increases. In Appendix~\ref{volume}, we show that the fraction of the channel parameter space for which superposition coding is optimal approaches one; hence superposition coding is in a sense almost always optimal for the P-BC. In comparison, the fraction of the parameter space for which the P-BC is degraded is always bounded away from 1.\\

%THIS FIGURE NEEDS TO BE REGENRATED
\begin{figure}[htpb]
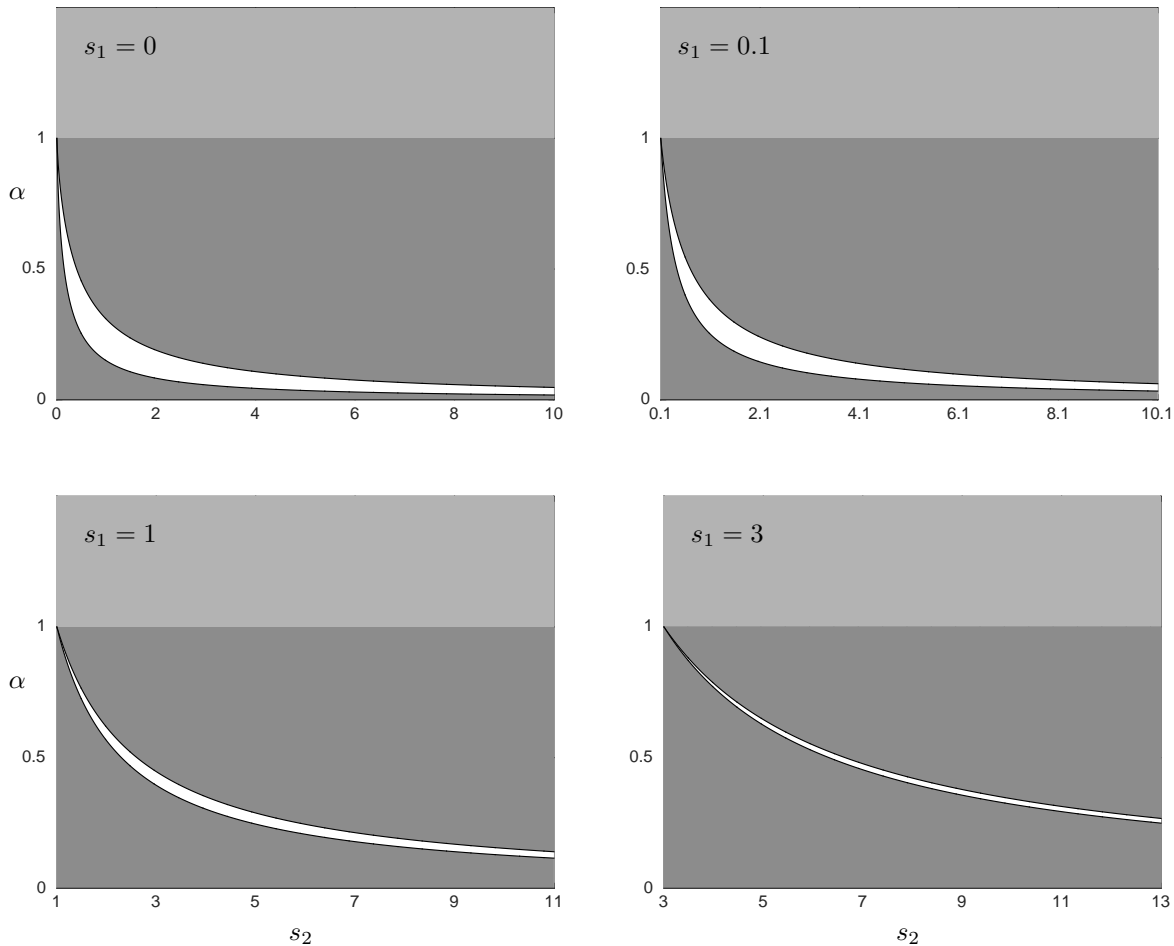

\begin{center}
\begin{tabular}{ccc}
\psfrag{p}[c]{}
\psfrag{s}[t]{$s_2$}
\psfrag{a}[c]{$\a$}
\psfrag{k1}[t]{$\ \ \ \ \ s_1=0$}
%\begin{center}
%\includegraphics[scale=0.75]{figs/pbc_optimal_4}
\includegraphics[scale=0.42]{optimal_1}&&
\psfrag{p}[c]{}
\psfrag{a}[c]{$\a$}
\psfrag{s}[t]{$s_2$}
\psfrag{k2}[t]{$\ \ \ \ \ s_1=0.1$}
\includegraphics[scale=0.42]{optimal_2}\\&&\\%&&\\
%(a) && (b)\\
&&\\
\psfrag{p}[c]{}
\psfrag{a}[c]{$\a$}
\psfrag{s}[t]{$s_2$}
\psfrag{k3}[t]{$\ \ \ \ \ s_1=1$}
\includegraphics[scale=0.42]{optimal_3}&&
\psfrag{p}[c]{}
\psfrag{a}[c]{$\a$}
\psfrag{s}[t]{$s_2$}
\psfrag{k4}[t]{$\ \ \ \ \ s_1=3$}
\includegraphics[scale=0.42]{optimal_4}\\&&\\
%(c) && (d)
% \includegraphics[scale=0.75]{figs/optimal_3}
% \includegraphics[scale=0.75]{figs/optimal_4}
\end{tabular}
%\end{center}
%\end{figure}
\caption{Plots of $\a$ versus $s_2$ for $s_1=0$ (top left), $s_1=0.1$ (top right), $s_1=1$ (bottom left) and $s_1=3$ (bottom right). The shaded areas in each plot are where superposition coding is optimal and the light shaded area ($\a\ge 1$) is where $Y_2$ is a degraded version of $Y_1$.}
\label{fig:sp_optimal}
\end{center}
\end{figure}

% Figure ~\ref{fig:line}
As illustrated in Figure~\ref{fig:line}, the capacity region of the P-BC is achieved using superposition coding in the ranges $\a \leq \a_{23}$ and $\a \ge \a_2$. Note that for the remaining range of $\a$ the channel is not more capable or effectively less noisy, which follows immediately from the if and only if conditions established in Theorems~\ref{thm:lnmc} and~\ref{thm:eff}. In the following section we explore bounds on the capacity region in the range for $\a \in (\a_{23}, \a_2)$.

\section{Gap between Marton and UV}\label{sec:gap}
The best known inner and outer bounds on the capacity region of the DM-BC are the Marton inner bound~\cite{Marton1979} and the UV outer bound~\cite{Nair--EG2007}, respectively. We show that for $\a \in (\a_{23}, \a_2)$ there can be a gap between these two bounds.

In~\cite{Geng--Jog--Nair--Wang2013}, Geng, Jog, Nair and Wang showed that for binary input broadcast channels, Marton's inner reduces to the set of rate pairs $(R_1,R_2)$ such that
\begin{align}\begin{split}\label{marton-inner}
R_1 &< I(W;Y_1) + \sum_{j=1}^k\beta_j I(X;Y_1|W=j),\\
R_2 &< I(W;Y_2) + \sum_{j=k+1}^5 \beta_j I(X;Y_2|W=j),\\
R_1 + R_2 &< \min\{I(W;Y_1), I(W;Y_2)\} +  \sum_{j=1}^k\beta_j I(X;Y_1|W=j) + \sum_{j=k+1}^5 \beta_j I(X;Y_2|W=j)
\end{split}\end{align}
for some $p_W(j)=\beta_j$, $j\in [1:5]$, and $p(x|w)$. This region is achieved using \emph{randomized time-division}~\cite{Hajek--Pursley1979}. %EDIT: randomized
This ingenious insight helps simplify the computation of Marton's inner bound for the binary P-BC; hence for the P-BC itself. 

The UV outer bound on the capacity region of the DM-BC is the set of rate pairs $(R_1,R_2)$ such that
\begin{align}\begin{split}\label{uv-outer}
R_1 & \le I(V;Y_1), \\
R_2 & \le I(U;Y_2), \\
R_1 + R_2 & \le I(V;Y_1) + I(U;Y_2|V),\\
R_1 + R_2 & \le I(V;Y_1|U) + I(U;Y_2)
\end{split}\end{align}
for some $p(u, v)$ and function $x(u,v)$, $|\Uc|, |\Vc| \le |\Xc| + 1$. Computing this bound even for binary input broadcast channels is quite difficult. Hence, we compute the maximum sum rates for the Marton and the UV bounds instead of the complete bounds.  
% with $|\Uc|=|\Vc| =3$ (In~\cite{cardinallityuv}, it is shown $|\Uc|,|\Vc| \leq |\Xc|+1$ is sufficient).
% We evaluate Marton's sum rate for  $\a_1 \le \a \le \a_2$ in~\eqref{marton-inner} and compare it with UV sum rate in~\eqref{uv-outer}. 
Figure~\ref{fig:gap} plots the maximum sum rates for $\a_{23} \leq \a \leq \a_2$ when $s_1=0.1$ and $s_2=1$ ($\a_{23}=0.27, \a_2=0.4$). Note that for $0.27 \le \a \le 0.286$, the sum rates coincide. For the rest of the range there is a small gap between the Marton and the UV bound sum rates. In particular for $\a=0.34$, the gap is approximately $0.0039$.\\

\begin{figure}[htbp]
\begin{center}
%\small
\psfrag{a}[bc][BC]{$\a$\, }
\psfrag{0}[bc][BC]{$0$}
\psfrag{1}[bc][BC]{$1$\ }
\psfrag{d}[r][cc]{} 
\psfrag{n}[r][cc]{}
\psfrag{m}[r][cc]{}
\psfrag{e}[r][cc]{}
\psfrag{a1}[cc][cc]{$\a_4$\ }
\psfrag{a2}[cc][cc]{$\a_3$\ }
\psfrag{a3}[cc][cc]{{\highlight $\a_2$}\ }
\psfrag{a34}[cc][cc]{$\a_{12}$\ }
\psfrag{a12}[cc][cc]{{\highlight $\a_{23}$}} 
\psfrag{a4}[cc][cc]{$\a_1$\ }
\psfrag{k}[c]{}
\psfrag{n2}[l][cc]{}
\psfrag{m2}[l][cc]{}
%\small
\psfrag{y}[b]{Max. sum rate}
\psfrag{a}[t]{$\a$}
\psfrag{M}[l][cc]{{\footnotesize Marton}}%$\Rr_\mathrm{Marton}$}}
\psfrag{U}[l][cc]{{\footnotesize UV}}%$\Rr^\mathrm{o}_{UV}$}}
\psfrag{C}[l][cc]{{\footnotesize Superposition}}%$\max\{C_1,C_2\}$}}
\includegraphics[scale=0.5]{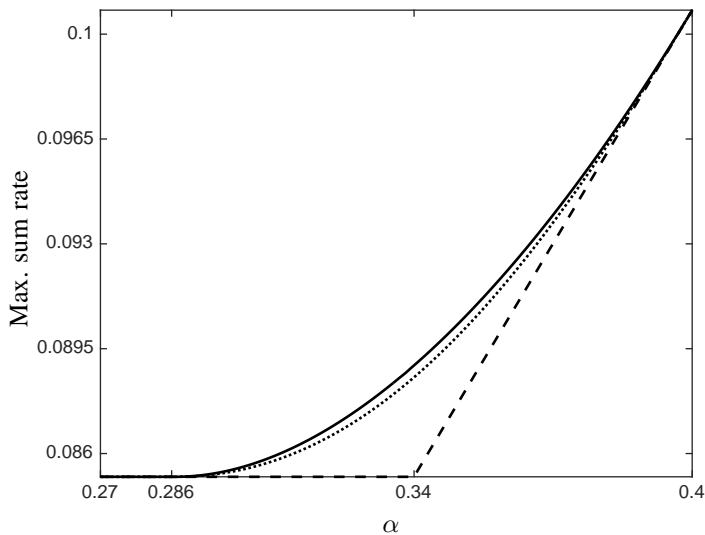}
\end{center}
\caption{Plots of the maximum Marton sum rate (dot), superposition sum rate (dash), and UV sum rate for $s_1 = 0.1, s_2 = 1$, and $\a \in [\a_{23}, \a_2]$.}
\label{fig:gap}
\end{figure}
It turns out that superposition coding is optimal in the range where Marton's sum rate and the $UV$ outer bound sum rate in Figure~\ref{fig:gap} coincide, i.e., for  $\a \in [0.27, 0.286]$. To show this, note that the condition for effectively less noisy in~\eqref{setPP} can be tightened further. It is clearly sufficient that $I(U;Y_1) \ge I(U;Y_2)$ for every $p^*(u,x) \in \Pr(U,X)$ and every $0 \le \l \le 1$ such that 
\begin{align}\label{stronger}
\max_{p(u,x)} \big(\lambda I(X;Y_1|U) + I(U;Y_2)\big) = \max_{p^*(u,x) \in \Pr(U,X)} \big(\lambda I(X;Y_1|U) + I(U;Y_2)\big). 
\end{align}
This is the condition satisfied for the P-BC in the range $\a \in [0.27,0.286]$. However, unlike the looser condition~\eqref{setPP}, where we are able to express the class of pmfs in terms of the concave envelope of $I(X;Y_1) - I(X;Y_2)$ and determine analytically the range in which the P-BC is effectively less noisy, we can only numerically evaluate the above condition.

%\begin{figure}[h]
%\psfrag{y}[t]{$R_1+R_2\qquad$}
%\psfrag{a}[t]{$\a$}
%\begin{center}
%\includegraphics[width=0.35\textwidth]{figs/gap2.eps}
%\end{center}
%\caption{Plots of Marton sum-rate (solid) and UV sum-rate (dash) for $s_1=0.1$,  $s_2=1$, and $\a \in [\a_1, \a_2]=[0.27, 0.4]$.}
%%\label{fig:gap}
%\end{figure}

%--------------------------------------------------------------------------------------------------------------------------------------------------
\section{Average power constraint}\label{sec:extensions}
The results on superposition coding in sections~\ref{sec:superposition}-~\ref{sec:eff} can be readily extended to the case when there is also an average power constraint. %Moreover the analysis of the optimality of superposition coding for the binary P-BC can be extended to the general binary broadcast channels????
%--------------------------------------
%\subsection{Average power constraint setting}
Suppose that in addition to the maximum power constraint $X(t) \le 1$ (which is needed for the capacity to be finite) there is an average power constraint , i.e.,
\begin{align}\label{averagePower}
\frac{1}{T}\int_0^T X(t) dt \le \sigma.
\end{align}

The capacity region of P-BC for the average power constraint setting is equal to the capacity region of the corresponding binary P-BC for $\E[X] \le \sigma$. This is a simple extension for Wyner's argument that the capacity of point-to-point Poisson channel is the equal to the capacity of the corresponding binary channel for $\E[X] \le \sigma$~\cite{Wyner1988a, Wyner1988b}.

% IS THIS OBVIOUS? SHOULD WE JUSTIFY IT????
% In showing the capacity for p2p Poisson channel for the average power constraint setting, Wyner shows that 1/\Delta I(X;Y) for E[X] \le \sigma is a lower bound and (ii) it is also an upper bound as \Delta \to 0. This proof extends to the BC setup.

Note that if a broadcast channel is less noisy or more capable, the UV outer bound in~\eqref{uv-outer} with an input constraint coincides with the superposition inner bound with the same input constraint. Hence, if a P-BC is less noisy, it is also less noisy under the average power constraint, and similarly if a P-BC is more capable, it is also more capable under the average power constraint. %superposition coding is optimal under the average power constraint.  How to write.

In contrast, if a P-BC is effectively less noisy, it is not necessarily effectively less noisy under the average power constraint. The results of effectively less noisy, however, can be easily extended to the average power constraint setting.

\begin{theorem}\textnormal{
For P-BC with an average power constraint in~\eqref{averagePower},
\begin{enumerate}
\item[1.] Receiver $Y_1$ is effectively less noisy than $Y_2$ if $\a \ge g_1(\min\{\sigma,q_2\})$, 
\item[2.] Receiver $Y_2$ is effectively less noisy than $Y_1$ if $\a \le g_2(\min\{\sigma,q_1\})$ %WHAT IS g????
\end{enumerate}
where the functions $g_1(\cdot)$ and $g_2(\cdot)$ are defined in~\eqref{g1} and~\eqref{g2}, respectively.
}\end{theorem}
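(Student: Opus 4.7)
The plan is to adapt the proof of Theorem~\ref{thm:eff} by observing that the average-power constraint restricts the marginal Bernoulli parameter of the superposition code without constraining the individual conditional parameters, and then restricting the outer maximization in Definition~\ref{def:effectivelyP-BC} to the resulting feasible interval $[0,\sigma]$.

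The first step will be to record the following.  For a superposition code $U\to X\to(Y_1,Y_2)$ on the binary P-BC with $X\mid\{U=u\}\sim\Bern(p_u)$, the constraint $\E[X]\le\sigma$ is $\sum_u p(u)p_u\le\sigma$, so each $p_u$ still ranges freely over $[0,1]$ while only the marginal $q=\sum_u p(u)p_u$ is restricted to $[0,\sigma]$.  Consequently, the upper concave envelope $\CC[\lambda I_1(q)-I_2(q)]$ is still taken over $q\in[0,1]$, and Definition~\ref{def:effectivelyP-BC} carries over verbatim once $[0,1]$ is replaced by $[0,\sigma]$ in the outer maximization: I must find $\Qcal\subseteq[0,\sigma]$ that supports
\[
\max_{q\in[0,\sigma]}\bigl(I_2(q)+\CC[\lambda I_1(q)-I_2(q)]\bigr)
\]
for every $\lambda\in[0,1]$ and satisfies $\CC[I_1(q)-I_2(q)]=I_1(q)-I_2(q)$ on $\Qcal$ (and the symmetric statement for the reverse direction).

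For part~1, I will choose $\Qcal=[0,\min\{\sigma,q_2\}]$.  If $\sigma<q_2$, then $\Qcal$ is already the entire feasible set, so the support condition is automatic.  If $\sigma\ge q_2$, the computation in the proof of Theorem~\ref{thm:eff} part~1 shows that $I_2(q)+\CC[\lambda I_1(q)-I_2(q)]$ is non-increasing on $[q_2,1]$, so its maximum over $[0,\sigma]$ is attained in $[0,q_2]=\Qcal$.  Applying Lemma~\ref{lemma:envelope} part~1, the identity $\CC[I_1(q)-I_2(q)]=I_1(q)-I_2(q)$ holds throughout $[0,\min\{\sigma,q_2\}]$ precisely when $t(\alpha)\ge\min\{\sigma,q_2\}$.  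Since $t=g_1^{-1}(\alpha)$ is monotone increasing in $\alpha$ on its active range, this is equivalent to $\alpha\ge g_1(\min\{\sigma,q_2\})$.

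Part~2 follows the same template with the roles of $Y_1$ and $Y_2$ exchanged.  Take $\Qcal=[\min\{\sigma,q_1\},\sigma]$.  The proof of Theorem~\ref{thm:eff} part~2 already gives that $I_1(q)+\CC[\lambda I_2(q)-I_1(q)]$ is non-decreasing on $[0,q_1]$, so its maximum on $[0,\sigma]$ is attained at $q=\sigma$ when $\sigma\le q_1$ and in $[q_1,\sigma]$ when $\sigma>q_1$---either way within $\Qcal$.  Lemma~\ref{lemma:envelope} part~2 then requires $r(\alpha)\le\min\{\sigma,q_1\}$, which by monotonicity of $g_2$ is exactly $\alpha\le g_2(\min\{\sigma,q_1\})$.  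The one conceptual point that has to be made carefully is the opening observation---that an average-power constraint shrinks the feasible range of the marginal $q$ but leaves the underlying upper concave envelope unchanged; once this is explicit, the argument is a direct restriction of the unconstrained proof to the interval $[0,\sigma]$, and the thresholds $g_1(\min\{\sigma,q_2\})$ and $g_2(\min\{\sigma,q_1\})$ emerge immediately from Lemma~\ref{lemma:envelope}.
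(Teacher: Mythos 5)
Your proposal is correct and follows essentially the same route as the paper: restrict the outer maximization in Definition~\ref{def:effectivelyP-BC} to $[0,\sigma]$, reuse the monotonicity of $I_2(q)+\CC[\lambda I_1(q)-I_2(q)]$ on $[q_2,1]$ (resp.\ of the symmetric expression on $[0,q_1]$) from the proof of Theorem~\ref{thm:eff} to justify $\Qcal=[0,\min\{\sigma,q_2\}]$ (resp.\ $[\min\{\sigma,q_1\},\sigma]$), and then read off the threshold from Lemma~\ref{lemma:envelope} via the monotonicity of $g_1$ and $g_2$. Your explicit remark that the concave envelope is still taken over $q\in[0,1]$ because only the marginal, not the conditional Bernoulli parameters, is constrained is a detail the paper leaves implicit, but it does not change the argument.
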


\begin{proof}

%Note that the constraint in~\eqref{averagePower} for P-BC is equivalent to the constraint that $\E[X] = q \le \sigma$ for the binary P-BC.
Note that $X \sim \Bern(q)$ for $q \in [0,\sigma]$ satisfies the average power constraint. The condition in Definition~\ref{def:effectivelyP-BC} can be modified to the average power constraint setting.
Let $\Qc \subseteq [0,\sigma]$ be such that for every $0 \le \lambda \le 1$,
\begin{align}\label{averageQ}
\max_{q\in[0,\sigma]} I_2(q) + \CC[\l I_1(q) - I_2(q)] = \max_{q\in\Qc} I_2(q) + \CC[\l I_1(q) - I_2(q)].
\end{align}
Receiver $Y_1$ is effectively less noisy than $Y_2$ under the average power constraint if $I_1(q) - I_2(q) = \CC[I_1(q) - I_2(q)]$ for every $q \in \Qc$.

\begin{enumerate}
\item[1.] Recall that for P-BC without an average power constraint, $Y_1$ is effectively less noisy than $Y_2$ if 
\[
I_1(q)-I_2(q) = \CC[I_1(q)-I_2(q)] \text{ for } q \in [0,q_2].
\]

For P-BC with an average power constraint $q \le \sigma$, $Y_1$ is effectively less noisy than $Y_2$ if
\[
I_1(q)-I_2(q) = \CC[I_1(q)-I_2(q)] \text{ for } q \in [0,\min\{\sigma,q_2\}].
\]
If $\a \ge g_1(\min\{\sigma,q_2\})$, $Y_1$ is effectively less noisy than $Y_2$ under the average power constraint.
\item[2.]
Recall that for P-BC without an average power constraint, $Y_2$ is effectively less noisy than $Y_1$ if 
\[
I_2(q)-I_1(q) = \CC[I_2(q)-I_1(q)] \text{ for } q \in [q_1,1].
\]
If there is an average power constraint $q \le \sigma$ for some $\sigma \ge q_1$, then $Y_2$ is effectively less noisy than $Y_1$ under the average power constraint if
\[
I_2(q)-I_1(q) = \CC[I_2(q)-I_1(q)] \text{ for } q \in [q_1,\sigma].
\]
Thus for $\a \le g_2(q_1)$, $Y_2$ is effectively less noisy than $Y_1$ under the average power constraint.

If there is an average power constraint $q \le \sigma$ for some $\sigma < q_1$, then $Y_2$ is effectively less noisy under the average power constraint than $Y_1$ if 
\[
I_2(q)-I_1(q) = \CC[I_2(q)-I_1(q)] \text{ for } q = \sigma.
\]
Thus for $\a \le g_2(\sigma)$, $Y_2$ is effectively less noisy under the average power constraint than $Y_1$.
\end{enumerate}
\end{proof}

%--------------------------------------------------------------------------------------------------------------------------------------------------------
\section{Final remarks}\label{sec:remarks}
We showed that superposition coding is optimal for almost all Poisson broadcast channels, and that when superposition is not optimal, there is a gap between Marton's inner bound and the UV outer bound. Hence the capacity region for the P-BC is still not known in general. 

We note that in~\cite{Geng--Nair--Shamai--Wang2013}, Geng, Nair, Shamai, and Wang similarly showed that for the class of binary input symmetric output broadcast channels (which do not include the binary P-BC) either superposition coding is optimal or there is a gap between Marton's inner bound and the UV outer bound. 

We introduced the effectively less noisy broadcast channels for which superposition coding is optimal. This condition for effectively less noisy can be further extended, but we can only verify it numerically.

%\subsection{Why is superposition coding almost always optimal}
% \subsection{General binary broadcast channels}
%It is indeed quite surprising that superposition coding is almost always optimal for the binary P-BC (hence, the P-BC). 

Why is this the case and does it hold for general binary input broadcast channels?

The intuitive reason superposition coding is almost always optimal for the P-BC is that the binary P-BC is more capable (and even less noisy) for most parameter ranges as established in Appendix~\ref{volume}. %Remark????.
 Hence one channel is almost always stronger than the other.

%First recall the transition probabilities for the binary P-BC:
%\begin{align}\begin{split}\label{BPBC}
%p_{Y_1^\Delta|X}(1|0)&=A_1s_1\Delta+O(\Delta^2), \quad p_{Y_2^\Delta|X}(1|0)=A_2s_2\Delta+O(\Delta^2),\\ 
%p_{Y_1^\Delta|X}(1|1)&=A_1(1+s_1)\Delta+O(\Delta^2), \quad p_{Y_2^\Delta|X}(1|1)=A_2(1+s_2)\Delta+O(\Delta^2).
%\end{split}\end{align}
%
%Note that for small $\Delta$, both the probability of receiving a 0 given $X=1$ and the probability of receiving a 0 given $X=0$ are %... ..... is 
%large for both channel components while the other probabilities are small for both of them. This consistency of the magnitudes of the channel transition probabilities results in superposition coding being almost always optimal. 

For other classes of binary input broadcast channels, the more capable condition is much less likely to be satisfied.
%For some classes of binary broadcast channels the channel is never more capable (hence never degraded or less noisy), but superposition coding is optimal for a surprisingly wide range of parameters. 
As an extreme case, consider the skewed binary broadcast channel in Figure~\ref{bssc}-(a), which is a generalization of the skew symmetric BC in~\cite{Hajek--Pursley1979}. We can show 
that %for $p_1 = p_2 \in (0,1)$, superposition coding is \emph{never} optimal. For 
 the channel is not more capable (hence also not less noisy or degraded) for every $p_1, p_2 \in (0,1)$. It turns out, however, that for $(p_1,p_2)$ in the dark shaded area in Figure~\ref{bssc}-(b), the channel is  effectively less noisy, and for $(p_1,p_2)$ in the lightly shaded area in Figure~\ref{bssc}-(b), the BC is not effectively less noisy but superposition coding is still optimal (which is shown by verifying the stronger condition in~\eqref{stronger}). These shaded areas constitute $76\%$ of the parameter space area! This clearly demonstrates that our intuition about when superposition coding  is optimal does not always hold. The unshaded area in~Figure~\ref{bssc}-(b) is where Marton's sum rate is strictly greater than the superposition sum rate (i.e., $\max\{C_1,C_2\}$), which implies that superposition coding is not optimal. %?????? Marton vs. UV for this (p1,p2)... 
 
%$p_1 = 0.65, p_2 = 0.9$ %In general superposition coding is almost never optimal for this channel model%IS THIS TRUE???? CAN WE SHOW IT??

\begin{figure}[h]
\begin{center}
\begin{tabular}{ccccc}
%\small
\psfrag{0}[r]{$0$}
\psfrag{1}[r]{$1$}
\psfrag{z}[l]{$1$}
\psfrag{o}[l]{$0$}
\psfrag{z1}[l]{$0$}
\psfrag{o1}[l]{$1$}
\psfrag{q}[b]{$a_1$} 
\psfrag{p'}[t]{$b_1$} 
\psfrag{p}[b]{$a_2$} 
\psfrag{q'}[t]{$b_2$} 
\psfrag{X}[c]{$X$}
\psfrag{Y1}[c]{$Y_1$}
\psfrag{Y2}[c]{$Y_2$}
\psfrag{q}[b]{$p_1$} 
\psfrag{z}[l]{$0$}
\psfrag{o}[l]{$1$}
\psfrag{q'}[t]{$p_2$} 
\psfrag{a}[c]{}
\includegraphics[scale=0.55]{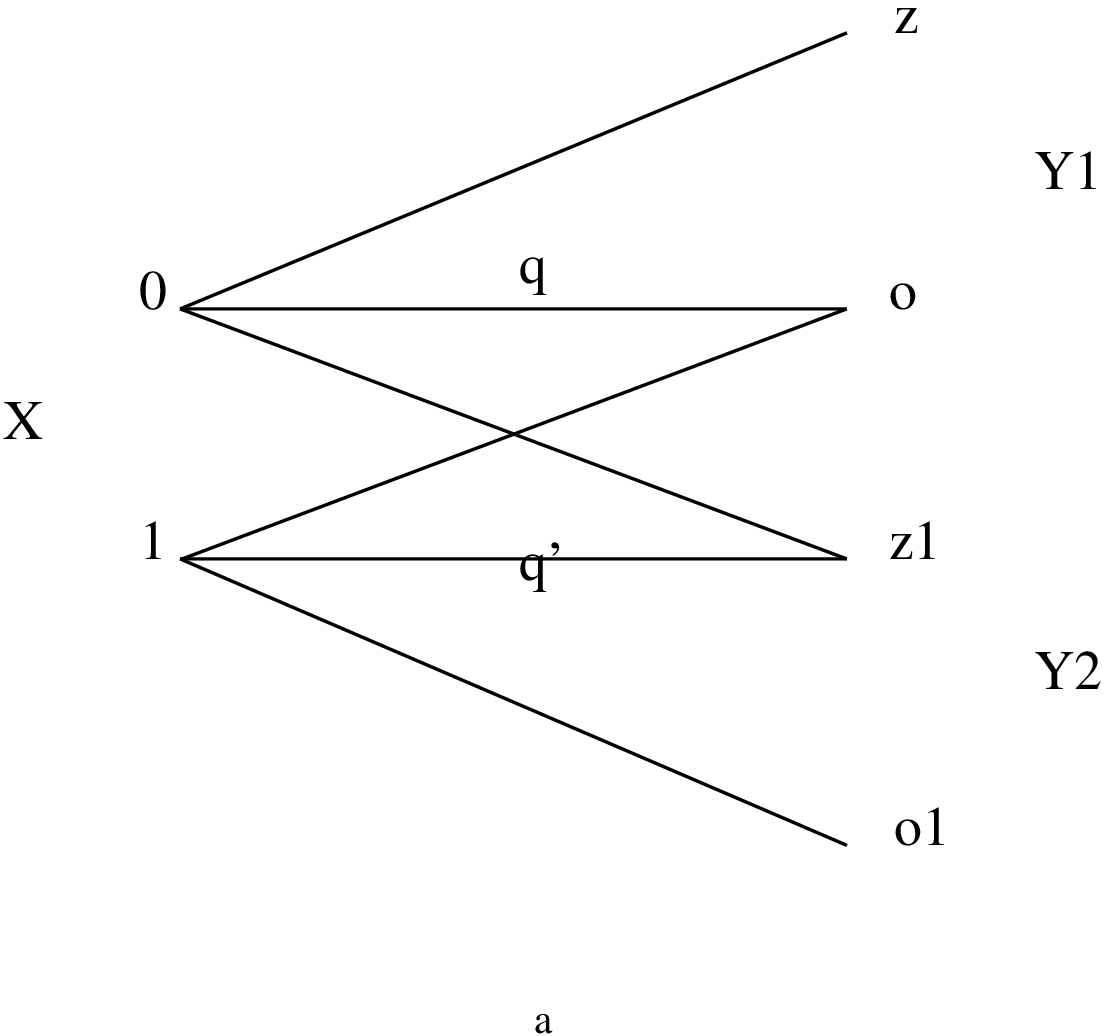}&&&&
\psfrag{A}[c]{\ \ Eff. less noisy}
\psfrag{B}[c]{SC}
\psfrag{C}[c]{SC}
\psfrag{D}[c]{suboptimal}
\psfrag{E}[c]{optimal}
\psfrag{p}[c]{}
\psfrag{p1}[c]{$p_1$}
\psfrag{p2}[c]{$\ \ p_2$}
\includegraphics[scale=0.45]{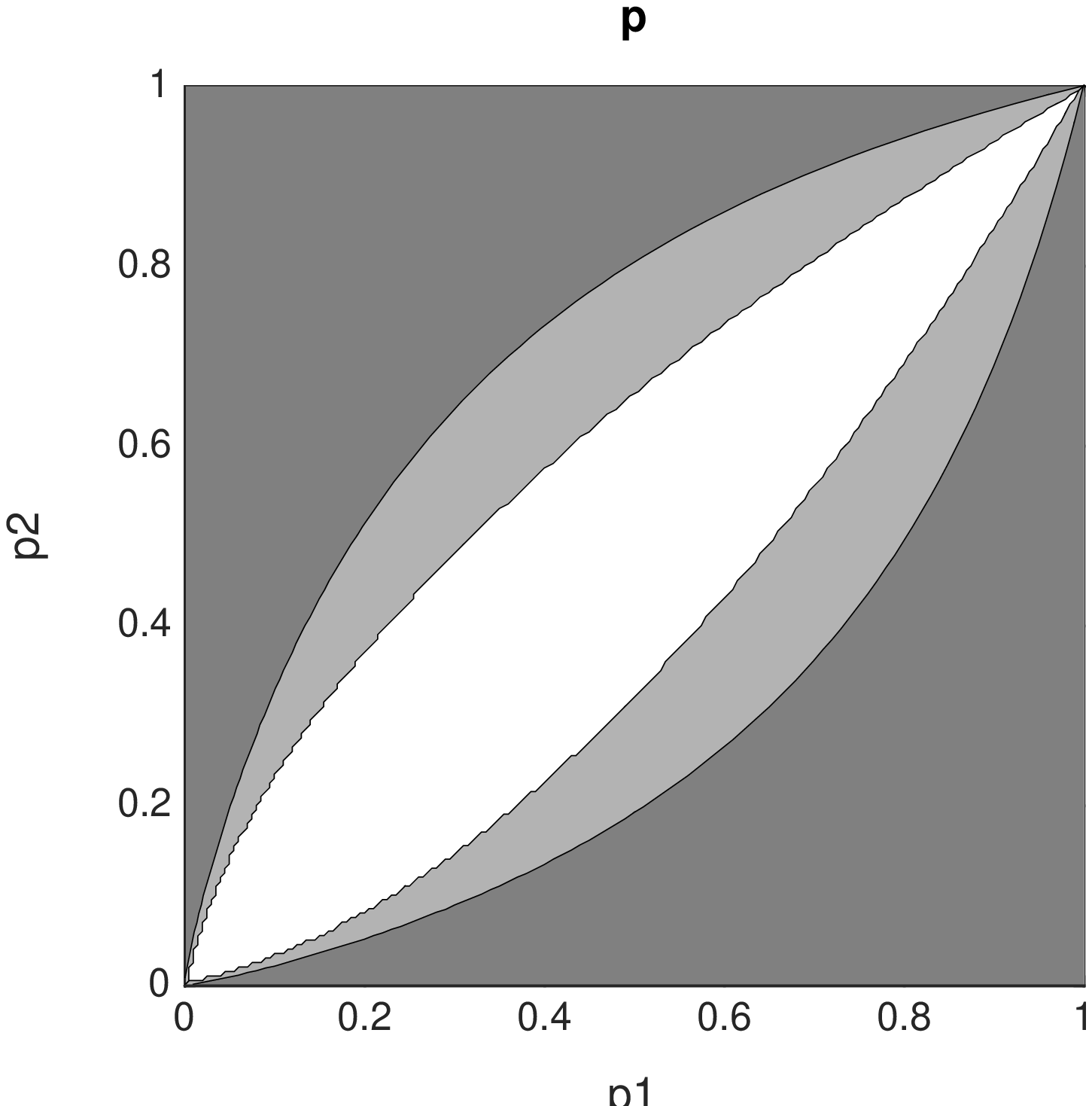}\\&&&&\\
(a) &&&& (b)
\end{tabular}
\caption{(a) skewed binary broadcast channel (b) Plot of $p_1$ vs $p_2$. The lightly shaded area is where superposition coding is optimal and the dark shaded area is where one receiver is effectively less noisy than the other receiver.}%ranges of $(p_1,p_2)$ for which the binary skew br superposition coding is optimal }
\label{bssc}
\end{center}
\end{figure}

%%%%%%To Do:  
To perform the analytical and computational evaluations, we relied heavily on the concave envelope method that has been used in other applications, including~\cite{Geng--Nair2014} which establishes the optimality of dirty paper coding and superposition coding for MIMO broadcast channels with common message, and~\cite{Geng--Gohari--Nair--Yu2014} which shows that UV outer bound is not tight.
%the direct proof of optimality of dirty paper coding for MIMO broadcast channel in~\cite{Weingarten--Steinberg--Shamai2006} to the setup with common message,

Finally, it would be interesting to find a similar extension of the more capable to the notion of effectively less noisy. The difficulty is finding an outer bound similar to the one we used for effectively less noisy. %A more general question is whether the UV bound is always tight for every broadcast channel for which superposition coding is optimal. We do not have any examples to the contrary.
%--------------------------------------------------------------------------------------------------------------------------------------------------------
%On optimality of superposition coding: Transition probabilities for BP-BC channel have \emph{consistent} magnitudes.
%\begin{figure}[h]
%\begin{center}
%\small
%\psfrag{0}[r]{$0$}
%\psfrag{1}[r]{$1$}
%\psfrag{z}[l]{$1$}
%\psfrag{o}[l]{$0$}
%\psfrag{z1}[l]{$0$}
%\psfrag{o1}[l]{$1$}
%\psfrag{q}[b]{$a_1$} 
%\psfrag{p'}[t]{$b_1$} 
%\psfrag{p}[b]{$a_2$} 
%\psfrag{q'}[t]{$b_2$} 
%\psfrag{X}[c]{$X$}
%\psfrag{Y1}[c]{$Y_1$}
%\psfrag{Y2}[c]{$Y_2$}
%\includegraphics[scale=0.4]{figs-new/Poisson-1.eps}
%\end{center}
%\end{figure}

%\section{Optimality of superposition coding for general binary broadcast channels}
%Compare the Marton sum rate and maximum of $C_1, C_2$. If Marton sum rate is strictly greater than $\max \{C_1,C_2\}$, then superposition coding is suboptimal. Grid space for four channel parameters, and have an indicator for whether Marton sum rate is greater than $\max\{C_1,C_2\}$ or not.
%
%Marton sum rate is
%\[
%\min \{I(X;Y_1) + p\big(I(X;Y_2|W=2) - I(X;Y_1|W=2)\big), I(X;Y_2) + (1-p)\big(I(X;Y_1|W=1) - I(X;Y_2|W=1)\big)\}
%\]
%
%\section{More on effectively less noisy broadcast channels}
%Binary skew $(p_1, p_2)$ broadcast channel: For what values of $p_1, p_2$, the BC is effectively less noisy?
%
%%-------------------------------------------------------------------------------------------------------

\bibliographystyle{IEEEtran}
\bibliography{nit}
\appendices
%-------------------------------------------------------------------------------------------------------
\section{Proof of Lemma~\ref{lemma:envelope}}\label{proof:envelope}
We first prove that  the function $(I_2(q) - I_1(q))$ can change concavity at most once in the range $q\in[0,1]$. In particular, the function $I_2(q) - I_1(q)$ is (i) concave in $[0,1]$ if $\a\le\a_4$, (ii) convex in $[0,1]$ if $\a\ge\a_1$, or (iii) convex in $[0,\kappa]$ and concave in $[\kappa,1]$ otherwise, where
\begin{align}\label{kappa}
\kappa = \frac{\a s_2 -s_1}{1-\a} \in [0,1].
\end{align}
To show the above statement, note that the second derivative of $I_2(q) - I_1(q)$,
\begin{align}\label{eq:diff}
I_2''(q) - I_1''(q)=\frac{(\a-1)q + \a s_2-s_1}{(q+s_1)(q+s_2)},
\end{align}
has a unique zero, is nonnegative if $\a \ge \a_1$, and is nonpositive if $\a \le \a_4$. Now we are ready to prove Lemma~\ref{lemma:envelope}.
%\end{proof}

\begin{enumerate}
%-------------------------------------------------------------------------------------------------------
\item[1.] %f_2 -> f_1, g_2 -> g_1
Let $f_1(q) = (I_1(q) - I_2(q))/(1-q)$. We first show that $t$ defined in~\eqref{def:t} satisfies
\begin{align}\label{argmax1}
t = \arg\max_{q\in[0,1]}f_1(q). 
\end{align}
It can be shown that
\begin{align*}
f_1'(q) = (1-q)^{-2} (\a - g_1(q)) \bigg((1+s_1)\log\frac{1+s_1}{q+s_1}-1+q\bigg),
\end{align*}
where $g_1(q)$ is defined in~\eqref{g1}. It can be easily checked that $f_1'(q)$ and $(\a-g_1(q))$ have the same sign for $0<q<1$ and $g_1(q)$ is increasing for $q>0$. Hence

\begin{enumerate}\itemsep 0.2em
\item%\noindent{(i)} 
if $\a \le g_1(0)=\a_3$, then $f_1'(q) \le 0$ in $[0,1]$ and $f_1(q)$ is maximized at $t$.
\item%\noindent{(ii)} 
if $\a \ge \lim_{q\to 1}g_1(q)=\a_1$, then $f_1'(q) \ge 0$ in $[0,1]$ and $f_1(q)$ is maximized at $t$.
\item%\noindent{(iii)} 
otherwise, $f_1'(q) \ge 0$ in $[0,g_1^{-1}(\a)]$ and $f'(q) \le 0$ in $[g_1^{-1}(\a),1]$ and $f_1(q)$ is maximized at $t$.
\end{enumerate}

Let 
\begin{align*}
&L_1(q)=\begin{cases}
I_1(q)-I_2(q) &\text{ if } 0\leq q \leq t,\\
(1-q)(I_1(t)-I_2(t))/(1-t)&\text{ if } q > t.
\end{cases}
\end{align*}
Then $L_1(q) \ge I_1(q)-I_2(q)$ because $f_1(t) \ge f_1(q)$ for any $q \in [0,1]$.

On the other hand, note that for $q > t$, $L_1(q)$ can be expressed as 
\[
L_1(q) = \big((1-q)/(1-t)\big)(I_1(t)-I_2(t)) +  \big((q-t)/(1-t)\big)(I_1(1)-I_2(1)).
\]
Since $q = \big((1-q)/(1-t)\big)\cdot t +  \big((q-t)/(1-t)\big) \cdot 1$, it follows that for $q >t$, $L_1(q)$ is a convex combination of $I_1(t)-I_2(t)$ and $I_1(1)-I_2(1)$. Thus $\mathfrak{C}[I_1(q)-I_2(q)] \ge L_1(q)$. 
%because for $t \leq q \leq 1$, $L_1(q)$ is a convex combination of $I_1(t)-I_2(t)$ and $I_1(1)-I_2(1)$. 

Finally to argue that $L_1(q) = \CC[I_1(q) - I_2(q)]$, we are left to show that $L_1(q)$ is indeed concave.
%$\mathfrak{C}[L_1(q)] = L_1(q)$.
We first show that $L_1(q)$ is concave in $[0,t]$. Recall that $I_1(q)-I_2(q)$ is concave in $[0, \kappa]$ for $\kappa$ in~\eqref{kappa}. Thus it suffices to show that $t \le \kappa$.  Equivalently we show that  $f_1'(\kappa) \le 0$. Consider 
\begin{align*}
  f_1'(q) &=\frac{I_1'(q)-I_2'(q)}{1-q} - \frac{(I_1(1)-I_2(1))-(I_1(q) - I_2(q))}{(1-q)^2}\\
	  &= (I_1'(q)-I_2'(q))/(1-q) - (I_1'(u) - I_2'(u))/(1-q)
\end{align*}
for some $q \le u\le 1$. The inequality $f_1'(\kappa) \le 0$ holds since $I_1''(q)-I_2''(q) \ge 0$ for $q \ge \kappa$ as can be seen from~\eqref{eq:diff}. %since $(\a-1) \le 1$.
 %it follows that $f_1'(\kappa)\le 0$.
 We conclude that $L_1(q)$ is concave because it is concave in $[0,t]$ and $[t,1]$ and is differentiable at $t$.

\item[2.] %f_1 -> f_2, I_1 -> I_2, I_2 -> I_1, g_1 -> g_2
%\vspace{-2mm}
Let $f_2(q)=(I_2(q)-I_1(q))/q$. We show that $r$ defined in~\eqref{def:r} satisfies
\begin{align}
r = \arg\max_{q\in[0,1]}f_2(q).\label{argmax2}
\end{align}
It can be shown that
\begin{align*}
f_2'(q) = q^{-2}(\a - g_2(q))(q -s_1\log(1+q/s_1)),
\end{align*}
where $g_2(q)$ is defined in~\eqref{g2}. It can be easily seen that $f_2'(q)$ and $(\a-g_2(q))$ have the same sign for $0<q<1$. Also it can be shown that $g_2(q)$ is increasing for $q>0$. Hence

\begin{enumerate}\itemsep 0.2em
\item%\noindent{(i)} 
if $\a \le \lim_{q \to 0}g_2(q)=\a_4$, then $f_2'(q) \le 0$ in $[0,1]$ and $f_2(q)$ is maximized at $r$.
\item%\noindent{(ii)} 
if $\a \ge g_2(1)=\a_2$, then $f_2'(q) \ge 0$ in $[0,1]$ and $f_2(q)$ is maximized at $r$.
\item%\noindent{(iii)} 
otherwise, $f_2'(q) \ge 0$ in $[0,g_2^{-1}(\a)]$ and $f_2'(q) \le 0$ in $[g_2^{-1}(\a),1]$ and $f_2(q)$ is maximized at $r$.
\end{enumerate}
Let
\begin{align*} 
L_2(q)&=\begin{cases}
(q/r)(I_2(r)-I_1(r))&\text{ if } 0\leq q < r,\\
I_2(q)-I_1(q) &\text{ if } q \geq r.
\end{cases}
\end{align*} 
Then $I_2(q)-I_1(q) \leq L_2(q)$ because $f_2(r) \ge f_2(q)$ for any $q \in [0,1]$.

On the other hand, $\mathfrak{C}[I_2(q)-I_1(q)] \ge L_2(q)$ because for $0 \leq q \leq r$, $L_2(q)$ is a convex combination of $I_2(r)-I_1(r)$ and $I_2(0)-I_1(0)$. To complete the proof we now show that $L_2(q)$ is concave. Recall that $I_2(q)-I_1(q)$ is concave in $[\kappa,1]$ for $\kappa$ in~\eqref{kappa}. To show that $L_2(q)$ is concave in $[r,1]$, we show $\kappa \le r$, i.e. $f_2'(\kappa) \ge 0$. Consider 
\begin{align*}
  f_2'(q) &= \frac{I_2'(q)-I_1'(q)}{q} - \frac{I_2(q) - I_1(q)}{q^2}\\
	  &= (I_2'(q)-I_1'(q))/q - (I_2'(u) - I_1'(u))/q
\end{align*}
for some $u \leq q$. Since $I_2''(q)-I_1''(q) \ge 0$ in $[0,\kappa]$, it follows that $f_2'(\kappa)\ge 0$. We conclude that $L_2(q)$ is concave because it is concave in $[0,r]$ and $[r,1]$ and is differentiable at $r$.
\end{enumerate}

%%-------------------------------------------------------------------------------------------------------
%\section{Essentially more capable binary input broadcast channel is more capable}\label{essentially mc}
%Suppose a binary broadcast channel is not more capable. Then there exists $p,q\in(0,1)$ such that $I(X;Y_2)_{p}-I(X;Y_1)_{p} > 0$ and $I(X;Y_1)_{q} - I(X;Y_2)_{q} > 0$. Then
%\begin{align*}
%\mathfrak{C}[I(X;Y_2)_r-I(X;Y_1)_r] > 0 &\text{ for every } r \in (0,1),\\
%-\mathfrak{C}[I(X;Y_1)_r-I(X;Y_2)_r] < 0 &\text{ for every } r \in (0,1).
%\end{align*}
%Thus, for every $X \sim \Bern(r)$, there exists $U_1$ and $U_2$ such that $I(X;Y_2|U_1)-I(X;Y_1|U_1)>0$ and $I(X;Y_2|U_2)-I(X;Y_1|U_2)<0$, i.e., the channel is not essentially more capable. %%

%-------------------------------------------------------------------------------------------------------
% \a_4 \leq \a_3 \leq \a_{23} \leq \a_{2} \leq \a_{12} \leq \a_1
\section{Order of breakpoints}\label{order}
Since less noisy channels are also more capable it follows that $\a_4 \le \a_3$ and $\a_2 \le \a_1$. Also $Y_1$ and $Y_2$ cannot be more capable simultaneously unless the two channels are identical. Thus, $0 \le \a_4 \le \a_3 \le \a_2 \le \a_1$. To completely characterize the order, we need to show the following:
\[
\a_2 \stackrel{(a)}{\le} \a_{12} \stackrel{(b)}{\le}\a_1 \text{ and }\a_3 \stackrel{(c)}{\le} \a_{23} \stackrel{(d)}{\le} \a_2.
\]
Note that inequalities $(b)$ and $(d)$ follow by Lemma~\ref{lemma:envelope}. The remaining inequalities were shown through simulation. To show $(a)$, we checked using Mathematica that the minimum of $\a_{12} - \a_2$ for $0\le s_1\le s_2$ is nonnegative. To show $(c)$, let
\[
w(x) = \frac{x\log(1+1/x)-1}{(1+x)\log(1+x)-(1+x)\log(q_2+x)-1+q_2}.
\]
Then the condition $\a_3 \le \a_{23}$ is equivalent to $w(x) \le w(s_2)$ for all $x \le s_2$ for any  $q_2$, and we checked that this is true using Mathematica and Maple.%$w'(x) \ge 0$ for $0 \le x \le s_2$.

%%--------------------------------------
%\section{An illustration of Proposition~\ref{proposition1}} %Added
%
%The function $(I_2(q)-I_1(q))$ and its concave envelope  $\mathfrak{C}[I_2(q)-I_1(q)]$ for some $\a$ in ranges 1, 2, 3 in Proposition~\ref{proposition1} and $\a_1 < \a < \a_2$ are shown in Figure~\ref{add}.
%
%\begin{figure}[htpb]
%\psfrag{a}[t]{$q$}
%\psfrag{q1}[t]{$~~~q_1$}
%\psfrag{q0}[t]{$r$}
%\psfrag{q00}[t]{$~\qquad r=0$}
%\psfrag{q02}[t]{$r$}
%\psfrag{q03}[t]{$r~~$}
%\psfrag{q13}[t]{$~~~q_1$}
%\psfrag{q04}[t]{$~r$}
%\begin{center}
%%\includegraphics[width=0.242\textwidth]{figs/figure4a_v2}~\includegraphics[width=0.242\textwidth]{figs/figure4b_v2}\\
%%\smallskip
%%\includegraphics[width=0.242\textwidth]{figs/figure4c_v2}~\includegraphics[width=0.242\textwidth]{figs/figure4d_v2}
%%%\includegraphics[width=0.55\textwidth]{figs/fig6_v2}
%\includegraphics[width=0.55\textwidth]{figs/fig6_v3}
%\caption{Plots of $I_2(q)-I_1(q)$ (solid) and $\mathfrak{C}[I_2(q)-I_1(q)]$ (dashed line) for $\a = 0.5\a_{01}$ (top left), $\a=0.3\a_{01}+0.7\a_{02}$ (top right), $\a=0.4\a_{02}+0.6\a_1$ (bottom left), and $\a = 0.5\a_1+0.5\a_2$ (bottom right) for  $s_1=0.1, s_2=1$.}
%\label{add}
%\end{center}
%\end{figure}

%-------------------------------------------------------------------------------------------------------
\section{volume}\label{volume}
Consider the set of channel parameters:  $(\a, s_1, s_2) \in [0,1] \times [0,b] \times [0, kb]$ for $b,k \geq 0$.
Let $b_0=\min\{b,kb\}$. The fraction of the set of P-BC parameters for which the channel is degraded is 
\begin{align*}
\frac{1}{kb^2}\iint_{s_1 \ge s_2} ds_2\, ds_1 =\begin{cases}
0.5k^{-1} &\text{ if }k \ge 1,\\
1-0.5k &\text{ if } k<1.
\end{cases}
\end{align*}
The fraction of the set of P-BC parameters for which superposition coding is optimal is lower bounded by the fraction for which the P-BC is less noisy. 
Consider 
{\allowdisplaybreaks
\begin{align*}
&1 - \frac{1}{kb^2}\iint_{s_1 \le s_2}  \int_{s_1/s_2}^{(1+s_1)/(1+s_2)} d\alpha \, ds_2\, ds_1\\
&=1 - \frac{1}{kb^2}\int_{0}^{{b_0}}\int_{s_1}^{kb}\int_{s_1/s_2}^{(1+s_1)/(1+s_2)}  d\alpha\, ds_2\, ds_1\\
%&=\frac{1}{kb^2}\int_{0}^{{b_0}}\int_{s_1}^{kb}\int_{s_1/s_2}^{(1+s_1)/(1+s_2)}  d\alpha\, ds_2\, ds_1\\
&=1-\frac{1}{kb^2}\int_{0}^{{b_0}}\int_{s_1}^{kb}\bigg(\frac{1+s_1}{1+s_2}-\frac{s_1}{s_2}\bigg) ds_2\, ds_1\\
&=1-\frac{1}{kb^2}\int_{0}^{{b_0}}\bigg((1+s_1)\log\frac{1+kb}{1+s_1}-s_1\log\frac{kb}{s_1}\bigg) ds_1\\
&=1-\frac{1}{kb^2}\bigg(\frac{(1+{b_0})^2-1}{2}\log (1+kb) - \frac{{b_0}^2}{2}\log kb - \frac{1+{b_0}^2}{2} \log (1+{b_0}) + \frac{{b_0}^2\log{b_0}}{2} + \frac{b_0}{2}\bigg)\\
&=1\!-\!\frac{1}{kb^2}\bigg(\frac{{b_0}^2}{2}\big(\log\frac{1+kb}{kb}\!-\!\log\frac{1+{b_0}}{{b_0}}\big)-\frac{\log(1+{b_0})}{2}+{b_0}\big(\log(1+kb)-\log(1+{b_0})+\frac{1}{2}\big)\bigg),
\end{align*}
which approaches to $1$ as $b \to \infty$.

%%---------------------------------------------------------------------------------------------------------------------------
%\section{notes}
%For the $1/\Delta$-extension binary P-BC, a rate pair $(R_1,R_2)$ is achievable if
%\begin{align*}%\begin{split}\label{region:more capable}
%R_1 &< (1/\Delta)I(X;Y^\Delta_1|U),\\
%R_2 &< (1/\Delta)I(U;Y^\Delta_2),\\
%R_1+R_2 &< (1/\Delta)I(X;Y_1^\Delta)
%%\end{split}
%\end{align*}
%for some pmf $p(u,x)$, and $|\Ucal| \le |\Xcal|+1$. 
%Let $U \in \{0,1,2\}$ where $p_U(i) = w_i$ and $p_{X|U}(1|i) = p_i \in [0,1]$. Then as $\Delta \to 0$, a rate pair $(R_1,R_2)$ is achievable if
%\begin{align*}%\begin{split}\label{region:more capable-pbc}
%R_1 &< \sum_{i\in\{0,1,2\}} w_i I_1(p_i),\\
%R_2 &< I_2(\sum_{i\in\{0,1,2\}}w_i p_i) - \sum_{i\in\{0,1,2\}}w_i I_2(p_i),\\
%R_1+R_2 &< I_1(\sum_{i\in\{0,1,2\}}w_i p_i)
%%\end{split}
%\end{align*}
%for some $w_i, p_i \in [0,1]$, $\sum_{i\in\{0,1,2\}} w_i = 1$.
%The UV outer bound for $1/\Delta$-extension binary P-BC as $\Delta \to 0$ can be obtained similarly. Under the conditions for less noisy and more capable channels, the inner bound and the outer bound coincide. 

\end{document}